\documentclass[onecolumn]{article}

\usepackage[numbers,sort,compress]{natbib}

\usepackage[letterpaper, left=1in, right=1in, top=1in, bottom=1in]{geometry}
\usepackage{graphicx}
\usepackage{multirow}
\usepackage{amsmath,amssymb,amsfonts}
\usepackage{amsthm}
\usepackage{mathrsfs}
\usepackage[title]{appendix}
\usepackage{xcolor}
\usepackage{textcomp}
\usepackage{manyfoot}
\usepackage{booktabs}
\usepackage{array}

\theoremstyle{thmstyleone}
\newtheorem{theorem}{Theorem}

\theoremstyle{thmstyletwo}
\newtheorem{example}{Example}

\theoremstyle{thmstylethree}

\newlength{\thickarrayrulewidth}
\newcommand{\dd}{\mathrm{d}}
\newcommand{\Ham}{\mathcal{H}}
\newcommand{\Lag}{\mathcal{L}}
\newcommand{\Sym}{\operatorname{Sym}}
\newcommand{\id}{\operatorname{id}}
\newcommand{\BK}{\mathrm{BK}}
\newcommand{\eps}{\varepsilon}
\newcommand{\FS}{{\mathrm{FS}}}
\begin{document}

\title{Gauge-covariant Hamilton equations for teleparallel equivalents of general relativity}

% Submit to Class and Quant Gravity.

\author{David Chester\thanks{Quantum Gravity Research, Topanga, CA, USA; email: \texttt{davidc@quantumgravityresearch.org}}\,\, and 
Vipul Pandey\thanks{Department of Physics, School of Sciences, IILM University, Greater Noida - 201306, India; email: \texttt{vipul.pandey@iilm.edu}}}

\maketitle
\setcounter{page}{1}

\begin{abstract}
We construct a gauge-covariant Hamiltonian formulation for the metric and symmetric teleparallel equivalents of general relativity. When treating torsion and nonmetricity as generalized velocities, the conjugate momenta are proportional to their respective superpotentials. The quadratic actions give regular Legendre maps on these field-strength fibers. For both theories, the second Hamilton equation reproduces the Einstein field equations. In the reduced ADM sector, the hypersurface generators obey the standard deformation algebra. The regular field-strength map changes the Legendre description but does not remove the gravitational constraints or alter the classical physics.
\\\\
\textit{Keywords:} teleparallelism, metric-affine gravity, covariant Hamilton equations, nonmetricity
\end{abstract}

%\maketitle

\pagebreak

\tableofcontents

\pagebreak

%======================================================================
\section{Introduction}
\label{sec:introduction}

General relativity admits equivalent descriptions based on curvature,
torsion, or nonmetricity, commonly called the geometrical trinity of gravity
\cite{BeltranJimenez:2019tjy,Capozziello:2021pcg}. The torsion formulation was
developed into the metric teleparallel equivalent of general relativity
(MTEGR) \cite{Sauer:2004hj,1976Cho,Hayashi1979,Maluf:1994ji,Maluf:2013gaa}.
Its gauge structure has also been studied within Poincar\'e gauge gravity
\cite{Blagojevic:2000pi,Blagojevic:2003cg,Barker2023}. The symmetric
teleparallel equivalent (STEGR) uses nonmetricity with vanishing curvature and
torsion \cite{Nester:1998mp,BeltranJimenez:2017tkd}. Both theories fit within
metric-affine gauge gravity
\cite{Hehl:1976my,HEHL19951,obukhov2003metric,Capozziello:2022zzh}.

The Einstein--Hilbert action is linear in curvature, so its covariant
field-strength Legendre map is singular. The quadratic MTEGR and STEGR actions
describe the same gravity with a different Legendre structure. Torsion and
nonmetricity can serve as generalized velocities, with the teleparallel
superpotentials as their conjugate momenta. De Donder--Weyl mechanics provides
a covariant setting without choosing a preferred foliation
\cite{Gunther1987,Kanatchikov:1997wp,Kanatchikov:2000jz}.

Canonical studies establish the MTEGR constraint structure and give $3+1$
formulations of STEGR
\cite{Maluf:1994ji,Maluf:2000zt,Guzman:2023,CapozzielloSauro}. The STEGR
construction of Ref.~\cite{Guzman:2023} uses the coincident gauge. A De
Donder--Weyl treatment of TEGR starts from tetrad Palatini variables and uses
generalized Dirac brackets \cite{Kanatchikov:2023tegr}. We instead construct a
common gauge-covariant formulation directly from the complete torsion and
nonmetricity field strengths. Their momenta are the teleparallel
superpotentials, and their second Hamilton equations display the Einstein
equation. For STEGR, Eqs.~\eqref{eq:C-action} and
\eqref{eq:stegr-phase-einstein} give the full $40$-component inverse and the
Einstein tensor in phase-space variables. Neither requires a $3+1$ split or
the coincident gauge.

The STEGR equivalence theorem applies to the metric sector at fixed flat
torsionless inertial connection, with boundary conditions that remove the
metric surface term. Legendre regularity concerns the field-strength fibers,
not the full gauge phase space. Diffeomorphism, connection, multiplier,
boundary, and global constraints remain. We do not give their unreduced Dirac
classification.

The metric-sector Hamiltonian structure can be pulled back to a hypersurface
using covariant phase-space methods
\cite{Crnkovic1988CPS,Lee:1990nz,Iyer:1994ys}. In the reduced ADM sector, this
gives the presymplectic form, the deformation generator, and its corner flux.
Its closure is governed by the field-dependent hypersurface-deformation
bracket \cite{Bergmann:1972ww,Teitelboim1973,Hojman:1976vp,Blohmann:2010jd}.
The independent connection and multiplier sectors are not included in that
reduced generator.

\subsection{Summary of results}

The principal result is a gauge-covariant field-strength Hamiltonian
formulation of MTEGR and STEGR. Torsion and nonmetricity are the generalized
velocities, and the teleparallel superpotentials are their momenta. For STEGR,
the inverse \eqref{eq:C-action} gives the quadratic Hamiltonian
\eqref{eq:stegr-ham} on all $40$ components. The second Hamilton equation
yields the Einstein-form identity \eqref{eq:stegr-phase-einstein}.
Invertibility is proved in Sec.~\ref{ssec:stegr} and App.~\ref{app:inverse}.
Theorem~\ref{thm:stegr-equivalence} proves a two-way equivalence between the
bulk metric Euler--Lagrange equation and the first-order Hamilton equations at
fixed flat torsionless inertial connection.

The second Hamilton equation gives the Einstein equation for every fixed flat
torsionless inertial connection in the declared STEGR metric sector. The MTEGR
inverse \eqref{eq:mtegr-inverse} and Einstein identity
\eqref{eq:mtegr-einstein} give the corresponding result in the torsion sector.
These results concern the covariant bulk equations. In the compact,
boundaryless, boundary-subtracted, trivial-holonomy coincident-gauge sector,
the separate canonical reduction gives the first-class ADM constraints with
two local degrees of freedom (App.~\ref{app:dirac}). This result is compared
with the existing MTEGR and STEGR Hamiltonian analyses
\cite{Maluf:2000zt,Guzman:2023,CapozzielloSauro}. It does not determine the
constraint matrix of the unreduced connection and multiplier theory.

Pulling the metric-sector polysymplectic form to a hypersurface gives the
reduced deformation generator and its conditional Bergmann--Komar algebra
(Sec.~\ref{sec:hypersurface} and App.~\ref{app:generator})
\cite{Bergmann:1972ww,Teitelboim1973,Blohmann:2010jd}. The
field-strength Hamiltonian is one term in the raw normal density. On shell,
the reduced generator becomes a corner integral. Bulk evolution on a closed
universe remains gauge, as in canonical general relativity.
In compact diagonal Bianchi I, swept four-volume exactly reparametrizes the
flow of $H_{\rm BI}=N\mathcal C_{\rm BI}$
(Example~\ref{ex:bianchi-volume-time}).

\subsection{Conventions}
\label{ssec:conventions}

We use signature $(-,+,+,+)$, $\kappa^2=32\pi G/c^4$, and weighted symmetrization $A_{(\mu\nu)}=\tfrac12(A_{\mu\nu}+A_{\nu\mu})$ throughout (where a different convention briefly appears in a quoted closed form, we say so explicitly). The last lower connection index is the derivative slot:
\begin{equation}
 \nabla_\mu V^\rho=\partial_\mu V^\rho+\Gamma^\rho{}_{\sigma\mu}V^\sigma,
 \qquad
 \nabla_\mu w_\rho=\partial_\mu w_\rho-\Gamma^\sigma{}_{\rho\mu}w_\sigma.
 \label{eq:connection-slot-convention}
\end{equation}
With this convention, nonmetricity, the paper's torsion sign, and curvature are
\begin{align}
Q_{\rho\mu\nu}
&=\partial_\rho g_{\mu\nu}
-\Gamma^\lambda{}_{\mu\rho}g_{\lambda\nu}
-\Gamma^\lambda{}_{\nu\rho}g_{\mu\lambda}, \nonumber\\
T^\rho{}_{\mu\nu}
&=2\Gamma^\rho{}_{[\mu\nu]}, \qquad
R^\rho{}_{\sigma\mu\nu}
=2\partial_{[\mu}\Gamma^\rho{}_{|\sigma|\nu]}
+2\Gamma^\rho{}_{\lambda[\mu}\Gamma^\lambda{}_{|\sigma|\nu]}.
\end{align}
so $T^\rho{}_{\mu\nu}$ is the negative of the torsion components used by authors who combine the same derivative-slot convention with $2\Gamma^\rho{}_{[\nu\mu]}$. All formulas below, including the MTEGR superpotential, use the displayed sign consistently.
The nonmetricity traces are $Q_\rho=Q_{\rho\lambda}{}^\lambda$ and
$\widetilde Q_\rho=Q_{\lambda\rho}{}^\lambda$, with analogous momentum traces
$\Pi_\rho=\Pi_{\rho\lambda}{}^\lambda$ and
$\widetilde\Pi_\rho=\Pi_{\lambda\rho}{}^\lambda$. The torsion trace is
$T_\mu=\delta_\rho^\nu T^\rho{}_{\mu\nu}$.
The affine connection decomposes as $\Gamma^\rho{}_{\mu\nu}=\bar\Gamma^\rho{}_{\mu\nu}+N^\rho{}_{\mu\nu}$ with $\bar\Gamma$ the Levi-Civita connection and the distortion $N^\rho{}_{\mu\nu}=-K^\rho{}_{\mu\nu}-L^\rho{}_{\mu\nu}$ split into contortion $K^\rho{}_{\mu\nu}$ from torsion and disformation $L^\rho{}_{\mu\nu}$ from nonmetricity,
\begin{align}
\bar{\Gamma}_{\rho\mu\nu}&= \partial_{(\mu} g_{|\rho|\nu)} -\tfrac12 \partial_\rho g_{\mu\nu}, \nonumber \\
K^\rho{}_{\mu\nu}&=T_{(\mu}{}^\rho{}_{\nu)} - \tfrac12 T^\rho{}_{\mu\nu},  \\
L^\rho{}_{\mu\nu}&=Q_{(\mu}{}^\rho{}_{\nu)} - \tfrac12 Q^\rho{}_{\mu\nu} \nonumber .
\end{align}
Bars denote pseudo-Riemannian objects with the Levi-Civita connection ($\bar\nabla$, $\bar R$). The local gauge fields are $(g_{ab},e^a{}_\mu,\omega_\mu{}^a{}_b)$ with corresponding local field strengths $(\Xi_{\mu ab},\Theta_{\mu\nu}{}^a,\Omega_{\mu\nu}{}^a{}_b)$. The tetrad postulate $\nabla_\mu e^a{}_\nu=0$ maps between the local and global connections. A summary of metric-affine and hypersurface symbols is collected in App.~\ref{app:conventions}. We reserve $\Ham$ for Hamiltonian densities of weight zero and write $\sqrt{-g}\,\Ham$ for the weight-one density.

\subsection{The teleparallel actions}
\label{ssec:actions}

The STEGR and MTEGR actions follow by applying different constraints to general teleparallel geometry and expressing the Einstein--Hilbert action in terms of a single field strength \cite{BeltranJimenez:2019tjy,Capozziello:2022zzh}. The STEGR action is obtained by evaluating the pseudo-Riemannian Ricci scalar in a symmetric-teleparallel spacetime with nonmetricity, while torsion and curvature vanish \cite{Nester:1998mp,BeltranJimenez:2017tkd},
\begin{align}
S_{\textrm{STEGR}}
&= \left. -\frac{c^4}{16\pi G}\int d^4x\,\sqrt{-g}\,\bar R \right|_{R=T=0}
=-\frac{2}{\kappa^2}\int d^4x\,\sqrt{-g}\Bigl[
\bar\nabla_\alpha\bigl(\widetilde Q^\alpha-Q^\alpha\bigr) \nonumber\\
&\hspace{2.1cm}
-\frac14 Q_{\rho\mu\nu}Q^{\rho\mu\nu}
+\frac12 Q_{\rho\mu\nu}Q^{\mu\nu\rho}
+\frac14 Q_\mu Q^\mu-\frac12 Q_\mu\widetilde Q^\mu\Bigr].
\label{eq:stegr-action}
\end{align}
Dropping the total-derivative term does not change the classical equations of motion. The resulting bulk action is proportional to the nonmetricity scalar,
\begin{align}
S_{\textrm{STEGR}} &\approx -\frac{2}{\kappa^2}\int d^4x\,\sqrt{-g}\,\mathbb Q
=-\frac{1}{\kappa^2}\int d^4x\,\sqrt{-g}\,Q_{\rho\mu\nu}\mathfrak P^{\rho\mu\nu}, \\
\mathbb Q &=-\frac14 Q_{\rho\mu\nu}Q^{\rho\mu\nu}+\frac12 Q_{\rho\mu\nu}Q^{\mu\nu\rho}+\frac14 Q_\mu Q^\mu-\frac12 Q_\mu\widetilde Q^\mu
=\frac12 Q_{\rho\mu\nu}\mathfrak P^{\rho\mu\nu}.
\end{align}
Here the nondensitized nonmetricity superpotential is denoted by
\begin{equation}
\mathfrak P^{\rho\mu\nu}=-\tfrac12 Q^{\rho\mu\nu}+\tfrac12 Q^{\mu\nu\rho}+\tfrac12 Q^{\nu\mu\rho}+\tfrac12\bigl(Q^\rho-\widetilde Q^\rho\bigr)g^{\mu\nu}-\tfrac14\bigl(g^{\rho\mu}Q^\nu+g^{\rho\nu}Q^\mu\bigr).
\label{eq:stegr-superpotential}
\end{equation}
The boundary term is essential for equivalence at the level of conserved charges \cite{Nester:1998mp,BeltranJimenez:2019tjy}, and it will reappear in the hypersurface analysis of Sec.~\ref{ssec:charges}. 

The MTEGR action is similarly proportional to the Ricci scalar in a metric-teleparallel spacetime with torsion \cite{Hayashi1979,Maluf:2013gaa,Capozziello:2022zzh},
\begin{align}
S_{\textrm{MTEGR}} &= \left.-\frac{c^4}{16\pi G}\int d^4x\,|e|\,\bar R \right|_{R = Q=0} \label{eq:mtegr-action} \\
 &= -\frac{c^4}{16\pi G}\int d^4x\,|e|\left(2\bar\nabla_\mu T^\mu-\frac14 T_{\rho\mu\nu}T^{\rho\mu\nu}+\frac12 T_{\rho\mu\nu}T^{\mu\nu\rho}+T_\mu T^\mu\right), \nonumber
\end{align}
conveniently factorized through the torsion superpotential $S_\rho{}^{\mu\nu}=-K^{\mu\nu}{}_\rho+T^{\mu}\delta^{\nu}_\rho-T^{\nu}\delta^{\mu}_\rho$ as 
\begin{equation}
S_{\textrm{MTEGR}}\approx-\frac{1}{\kappa^2}\int d^4x\,|e|\,T^\rho{}_{\mu\nu}S_\rho{}^{\mu\nu}.
\end{equation} 
Both actions are quadratic in their respective field strengths. Their
field-strength Legendre maps are therefore algebraic, and
Secs.~\ref{ssec:stegr} and~\ref{ssec:mtegr} establish their invertibility on
the declared fibers.

Section~\ref{sec:classical} constructs the field-strength Legendre maps and
Hamilton equations for metric-affine gravity.
Section~\ref{sec:teleparallel} specializes them to
STEGR and MTEGR, gives the explicit inverses, and derives the Einstein-form
equations. Section~\ref{sec:hypersurface} derives the reduced hypersurface generator and
deformation algebra, relates it to covariant Hamiltonian evolution, and
identifies the lapse zero mode measured by swept four-volume.
Section~\ref{sec:conclusions} gives the final scope.
Appendices A--D collect the notation, polysupermetric inverse, generator
identity, and reduced Dirac analysis.

%======================================================================
\section{Gauge-covariant Hamiltonians for metric-affine gravity}
\label{sec:classical}

\subsection{The field-strength Hamiltonian formulation}
\label{ssec:bundles}\label{ssec:fsform}

De Donder--Weyl mechanics promotes the partial derivatives $\partial_\mu\phi^A$ to independent polyvelocity coordinates on the first jet bundle $J^1Y$ of a configuration bundle $\pi:Y\to X$ \cite{Gunther1987,Kanatchikov:1997wp,Kanatchikov:1997pj,Kanatchikov:2000jz,Carinena:1991,Gotay:1997,Forger:2004}. The present formulation instead takes the fiberwise Legendre transform with respect to a specified field strength. This requires identifying the field-strength bundle and its relation to the first jet.

A field, or gauge potential, $\phi^A$ is a section of the configuration bundle
$\pi:Y\to X$ over spacetime $X$. The index $A$ labels its components in a
local trivialization. Examples are the gauge potential $A_\mu^a$, the coframe
$e^a{}_\mu$, and the metric $g_{\mu\nu}$, with field strengths
$F_{\mu\nu}^a$, $T^a{}_{\mu\nu}$, and $Q_{\rho\mu\nu}$. The field strength
$F_\mu{}^A$ is the first-order differential quantity through which the
Lagrangian depends on derivatives of $\phi^A$.

In general, the field strength is a natural first-order bundle map, not the derivative of the potential itself. In local jet coordinates, every field strength used here is affine in the first derivative,
\begin{equation}
F_\mu{}^A=m_\mu{}^{A}{}_{B}{}^{\nu}(\phi,x)\,
\partial_\nu\phi^B+b_\mu{}^A(\phi,x),
\label{eq:fs-affine-def}
\end{equation}
where $m$ is the principal symbol and $b$ is the zero-jet affine part. The latter contains connection actions and self-interactions, such as $A\wedge A$ in a nonabelian curvature. Neither term is generally tensorial by itself. Their sum is the specified natural bundle map. Only when $m$ is the identity and $b$ is the connection action does Eq.~\eqref{eq:fs-affine-def} reduce to a covariant polyvelocity $F_\mu{}^A=\nabla_\mu\phi^A$. This includes scalar fields and STEGR nonmetricity, but not torsion or curvature as field strengths of their potentials.

Given a configuration bundle $Y\to X$ and a specified smooth bundle map over $X$,
\begin{equation}
\mathcal F:\;J^1Y\longrightarrow E_F,
\end{equation}
into a specified vector bundle $E_F\to X$ of field-strength type, the field-strength variables are coordinates on the fiber product $Z_F:=Y\times_X E_F$ via the graph map
\begin{equation}
\widehat{\mathcal F}:\;J^1Y\to Z_F,\qquad
j^1_x\phi\mapsto\bigl(\phi(x),\,\mathcal F(j^1_x\phi)\bigr),
\label{eq:graph-map}
\end{equation}
subject to the compatibility relation $F=\mathcal F(j^1\phi)$ on sections. A bare coordinate expression $F_\mu{}^A(\phi,\partial\phi;x)$ does not by itself define a bundle. On a rank-$N$ vector bundle with transition functions $M^A{}_B(x)$, the derivative transforms inhomogeneously,
\begin{equation}
\partial_{\rho'}\phi^A_{(j)}
= \frac{\partial x^\rho}{\partial x^{\rho'}}\Bigl(M^A{}_B\,\partial_\rho\phi^B_{(i)}+(\partial_\rho M^A{}_B)\,\phi^B_{(i)}\Bigr).
\label{eq:overlap}
\end{equation}
Local formulas glue to a global $\mathcal F$ only when built from natural operations, such as an exterior derivative or covariant derivative with connection/gauge field, on specified bundles. All cases used in this paper are of this type, and for each one the principal symbol of $\mathcal F$, its linearization in the first-derivative jet coordinates, explains which first-jet data the field strength forgets.

For gauge theories and gravity the resulting field strengths live on the following bundles.

\begin{example}[scalar field]
$Y=X\times\mathbb R$, $E_F=T^*X$, $\mathcal F(j^1\varphi)=\dd\varphi$. The field strength remembers the entire first jet. This is the canonical DDW case, with $m=\id$ and $b=0$ in Eq.~\eqref{eq:fs-affine-def}. A complex scalar of charge $q$ instead has $m=\id$ and $b_\nu=-iqA_\nu\varphi$, so its field strength is the minimal-coupling combination $D_\nu\varphi$.
\end{example}

\begin{example}[Yang--Mills]
\label{ex:ym}
$Y=C(P)$, the affine bundle of connections on a principal $G$-bundle, $E_F=\Lambda^2T^*X\otimes\mathrm{ad}(P)$, and
$F^a{}_{\mu\nu}=A^a{}_{\nu,\mu}-A^a{}_{\mu,\nu}+g_{\rm YM}f^a{}_{bc}A^b{}_\mu A^c{}_\nu$. The symbol is the antisymmetrization map. Its kernel consists of the symmetric part of the potential's first jet, i.e.\ exactly the linearized gauge directions $\delta A=\dd\lambda$ at the symbol level. This is the precise version of the statement that the curvature ``defines a field-strength bundle.''
\end{example}

\begin{example}[MTEGR \& torsion]
$Y=\mathrm{Cof}(V)\times_X C_{\rm SO}(P_{\rm SO})$ (coframes $\times$ Lorentz connections), $E_T=\Lambda^2T^*X\otimes V$, $\mathcal F_T(j^1(e,\omega))=d_\omega e$. The symbol acts only through the coframe jet, since torsion sees no derivatives of the connection. The teleparallel restriction $\Omega(\omega)=0$ is therefore additional data, imposed by restriction or multipliers, independent of the torsion map.
\end{example}

\begin{example}[STEGR \& nonmetricity]
\label{ex:stegrbundle}
$Y=\mathrm{Lor}(T^*X)\times_X C(TX)$ (Lorentzian metrics $\times$ linear connections), $E_Q=T^*X\otimes\Sym^2T^*X$, and $Q_{\rho\mu\nu}=\nabla_\rho g_{\mu\nu}$. The first-derivative linearization is
\begin{equation}
\frac{\partial Q_{\rho\mu\nu}}{\partial g_{\alpha\beta,\sigma}}=\delta^\sigma_\rho\,\delta^\alpha_{(\mu}\delta^\beta_{\nu)},
\qquad
\frac{\partial Q_{\rho\mu\nu}}{\partial\Gamma^\lambda{}_{\alpha\beta,\sigma}}=0.
\end{equation}
The symbol is injective on the metric first jet. Nonmetricity has the same tensor type as $\partial g$, and no metric first-jet direction lies in the kernel. The kernel consists only of connection first-derivative increments. This separates the field-strength Legendre map on the nonmetricity fiber from the independent restrictions $T(\Gamma)=0$ and $R(\Gamma)=0$.
\end{example}

Global metric-affine gravity with $(g,\Gamma)$ and all three field strengths
leads to the direct sum $E_{\rm MAG}=E_Q\oplus E_T\oplus E_R$. Nonmetricity
contains all metric first derivatives, curvature contains only the
antisymmetric connection derivatives, and torsion contains no connection
derivatives. Thus $(Q,T,R)$ do not form a bundle isomorphic to $J^1Y$ without
extra data. Potentials and field strengths instead give coordinates on the
fiber product $Y\times_X E_{\rm MAG}$ through the graph of
$\widehat{\mathcal F}$. The corresponding local potentials are
$(g_{ab},e_\mu{}^a,\omega_\mu{}^a{}_b)$. Table~\ref{tab:DDW_MAG} relates their
momenta to DDW polymomenta and to the excitations of electrodynamics
\cite{Hehl2003,Puetzfeld:2014qba,Obukhov:2018bmf}.

\begin{table}[t]
\centering
\small
\begin{tabular}{@{}p{0.23\linewidth}p{0.23\linewidth}p{0.42\linewidth}@{}}
\toprule
De Donder--Weyl & Electrodynamics & Metric-affine gravity \\
\midrule
Field $\phi^A$
& Potential $A_\mu$
& Potentials $(g_{ab},e^a{}_\mu,\omega_\mu{}^a{}_b)$ \\
Polyvelocity $\partial_\mu\phi^A$
& Field strength $F_{\mu\nu}$
& Field strengths $(Q_{\rho\mu\nu},T^\rho{}_{\mu\nu},
R^\rho{}_{\sigma\mu\nu})$ \\
Polymomentum
$\partial(\sqrt{-g}\Lag)/\partial(\partial_\mu\phi^A)$
& Excitation
$H^{\mu\nu}=\partial(\sqrt{-g}\Lag)/\partial F_{\mu\nu}$
& Momenta
$(\Pi,P,\Sigma)=\partial(\sqrt{-g}\Lag)/\partial(Q,T,R)$ \\
\bottomrule
\end{tabular}
\caption{Dictionary between DDW mechanics, macroscopic electrodynamics, and metric-affine gauge gravity. The excitation contains the polarization and magnetization. The gravitational excitations are proportional to the teleparallel superpotentials.}
\label{tab:DDW_MAG}
\end{table}

With the field-strength bundle established, the Hamiltonian formulation is its fiberwise Legendre transform, taken in the field strength instead of in a time derivative. For a Lagrangian density $L=\sqrt{-g}\,\Lag(\phi^A,F_\mu{}^A;x)$ the momentum field strength is
\begin{equation}
\Pi^\mu{}_A=\frac{\partial\sqrt{-g}\,\Lag}{\partial F_\mu{}^A},
\end{equation}
and the field-strength Hamiltonian density is the fiberwise Legendre transform
\begin{equation}
\sqrt{-g}\,\Ham=\Pi^\mu{}_A F_\mu{}^A-\sqrt{-g}\,\Lag .
\label{eq:legendre}
\end{equation}

Differentiating Eq.~\eqref{eq:legendre} with the chain rule gives two Legendre identities that carry the whole formulation. At fixed fields,
\begin{equation}
\frac{\partial\sqrt{-g}\,\Ham}{\partial\Pi^\mu{}_A}
=F_\mu{}^A
+\Pi^\nu{}_B\frac{\partial F_\nu{}^B}{\partial\Pi^\mu{}_A}
-\frac{\partial\sqrt{-g}\,\Lag}{\partial F_\nu{}^B}\frac{\partial F_\nu{}^B}{\partial\Pi^\mu{}_A}
=F_\mu{}^A,
\label{eq:legendre-id1}
\end{equation}
and at fixed momenta,
\begin{equation}
\frac{\partial\sqrt{-g}\,\Ham}{\partial\phi^A}\bigg|_{\Pi}
=\Pi^\nu{}_B\frac{\partial F_\nu{}^B}{\partial\phi^A}
-\frac{\partial\sqrt{-g}\,\Lag}{\partial\phi^A}\bigg|_{F}
-\frac{\partial\sqrt{-g}\,\Lag}{\partial F_\nu{}^B}\frac{\partial F_\nu{}^B}{\partial\phi^A}
=-\,\frac{\partial\sqrt{-g}\,\Lag}{\partial\phi^A}\bigg|_{F},
\label{eq:legendre-id2}
\end{equation}
where in each line the middle terms cancel against the definition of the momentum. The first identity inverts the momentum map, while the second converts the fixed-momentum gradient of the Hamiltonian density into the fixed-field-strength gradient of the Lagrangian density. The Euler--Lagrange equations are therefore equivalent to the pair of field-strength Hamilton equations whenever the Legendre map is invertible on the field-strength fiber, with no reference to a foliation. The same equivalence follows from varying the canonical action, as displayed next.

For the affine first-jet map in Eq.~\eqref{eq:fs-affine-def}, define
\begin{align}
 \mathscr E_{\mathcal F,A}(\Pi)
 &:={\partial}_\nu\!\left(
 m_\mu{}^{B}{}_{A}{}^{\nu}\Pi^\mu{}_B\right)
 -\left[
 \frac{\partial m_\mu{}^{B}{}_{C}{}^{\nu}}{\partial\phi^A}
 \partial_\nu\phi^C
 +\frac{\partial b_\mu{}^B}{\partial\phi^A}
 \right]\Pi^\mu{}_B .
 \label{eq:affine-euler-operator}
\end{align}
The operator $\mathscr E_{\mathcal F}$ is the negative formal adjoint of the linearized field-strength map acting on the momentum. For differential-form field strengths, its derivative part is the covariant exterior derivative of the excitation form. Its remaining terms account for the algebraic dependence of the field strength on the potentials. Variation of the canonical action $S=\int d^4x\,(\Pi^\mu{}_AF_\mu{}^A-\sqrt{-g}\,\Ham)$ gives the two field-strength Hamilton equations in their general form,
\begin{equation}
F_\mu{}^A=\frac{\partial\sqrt{-g}\,\Ham}{\partial\Pi^\mu{}_A},
\qquad
\mathscr E_{\mathcal F,A}(\Pi)
=-\frac{\partial\sqrt{-g}\,\Ham}{\partial\phi^A},
\label{eq:fs-hamilton}
\end{equation}
which reproduce the Euler--Lagrange equations whenever the Legendre map is invertible on the field-strength fiber. The first equation returns the field strength itself, not $\nabla_\mu\phi^A$ in general. The second contains the complete affine operator $\mathscr E_{\mathcal F}$, not a covariant divergence in general. Both reduce to the familiar covariant-polyvelocity equations $F_\mu{}^A=\nabla_\mu\phi^A$ and $\nabla_\mu\Pi^\mu{}_A=-\partial(\sqrt{-g}\Ham)/\partial\phi^A$ only when the symbol is the identity and $b$ is the connection action on $\phi$. Scalar fields and STEGR nonmetricity have this form. Torsion and curvature instead have antisymmetrizing symbols, so their equations retain the corresponding projection encoded by $\mathscr E_{\mathcal F}$.
These equations may also be expressed as Poisson--Gerstenhaber derivations of
Hamiltonian forms \cite{Kanatchikov:1997wp,Kanatchikov:2000jz}, but no
graded-bracket construction is needed below.

The Maxwell example below shows how the field-strength Hamilton equations
reproduce the field equations without a time split. The local and global
metric-affine equations are summarized in App.~\ref{app:conventions}. For
quadratic theories,
\begin{equation}
\Lag(\phi,F;x)=\tfrac12 M^{\mu\nu}{}_{AB}(\phi;x)\,F_\mu{}^AF_\nu{}^B-V(\phi;x),
\end{equation}
the object $M^{\mu\nu}{}_{AB}$ is a metric on the space of field strengths. We call it the ``polysupermetric'', since it generalizes Wheeler's supermetric on superspace \cite{Wheeler1968} to the polymomentum setting. When the polysupermetric is invertible on the actual field-strength fiber, the Hamiltonian density is again quadratic,
\begin{equation}
\Ham(\phi,\Pi;x)=\tfrac12 M_{\mu\nu}{}^{AB}(\phi;x)\,\Pi^\mu{}_A\Pi^\nu{}_B+V(\phi;x),
\end{equation}
with $M_{\mu\nu}{}^{AB}$ the fiberwise inverse, and on constrained tensor fibers the correct identity defining the inverse is contraction to the projector onto the fiber. 

\begin{example}[electrodynamics]
\label{ex:em}
Take Minkowski spacetime with the potential $A_\mu$ as position field, so the velocity field strength is the electromagnetic field strength $F_{\mu\nu}=\partial_\mu A_\nu-\partial_\nu A_\mu$ and the field-strength action is
\begin{equation}
S(A_\mu,F_{\mu\nu};x)=\int d^4x\left(-\frac{1}{4\mu_0}F_{\mu\nu}F^{\mu\nu}-j^\mu A_\mu\right).
\end{equation}
The momentum field strength,
\begin{equation}
\Pi^{\mu\nu}=\frac{\partial\Lag}{\partial F_{\mu\nu}}=-\frac{1}{2\mu_0}F^{\mu\nu},
\end{equation}
is proportional to the excitation tensor of macroscopic electrodynamics, which packages the electric displacement $\vec D$ and the magnetic field $\vec H$ \cite{Hehl2003}. The field-strength Hamiltonian density is
\begin{equation}
\Ham=-\mu_0\,\Pi_{\mu\nu}\Pi^{\mu\nu}+j^\mu A_\mu,
\end{equation}
and the field-strength Hamilton equations read
\begin{equation}
F_{\mu\nu}=\frac{\partial\Ham}{\partial\Pi^{\mu\nu}}=-2\mu_0\,\Pi_{\mu\nu},
\qquad
\partial_\nu\bigl(\Pi^{\nu\mu}-\Pi^{\mu\nu}\bigr)=-\frac{\partial\Ham}{\partial A_\mu}=-j^\mu .
\end{equation}
Substituting the first into the second returns Maxwell's equations, $\partial_\nu F^{\mu\nu}=-\mu_0 j^\mu$. The field-strength formulation bypasses the Legendre constraints and Dirac--Bergmann stabilization procedure needed when Maxwell theory is embedded in the conventional potential-based canonical phase space. It does not remove gauge symmetry. Gauss's law follows directly as the normal component of the covariant Hamilton equation.
\end{example}

This pattern of potential, field strength, excitation, then quadratic Hamiltonian density is the one followed by the gravitational sectors, as summarized in Table~\ref{tab:DDW_MAG}.

\subsection{Local metric-affine gravity}\label{ssec:localglobal}

The field-strength formulation of Sec.~\ref{ssec:fsform} was stated for a generic bundle map $\mathcal F:J^1Y\to E_F$. This subsection instantiates it for metric-affine gravity in both the local triplet $(g_{ab},e^a{}_\mu,\omega_\mu{}^a{}_b)$ and the global pair $(g_{\mu\nu},\Gamma^\rho{}_{\mu\nu})$. It also includes the geometric identities that reduce the metric-affine structure to the teleparallel equivalents treated in Secs.~\ref{ssec:mtegr} and~\ref{ssec:stegr}.

When the tetrad postulate holds, the global field strengths are easily related to the local ones,
\begin{equation}
R^\rho{}_{\sigma\mu\nu}=e_a{}^\rho e^b{}_\sigma\,\Omega_{\mu\nu}{}^a{}_b,\qquad
T^\rho{}_{\mu\nu}=e_a{}^\rho\,\Theta_{\nu\mu}{}^a,\qquad
Q_{\rho\mu\nu}=e^a{}_\mu e^b{}_\nu\,\Xi_{\rho ab}.
\label{eq:lg-localglobal}
\end{equation}
The affine connection is fixed by the metric jet, torsion, and nonmetricity. With the Schouten brace
\begin{equation}
Q_{\{\mu\alpha\beta\}}:=Q_{\alpha\mu\beta}+Q_{\beta\mu\alpha}-Q_{\mu\alpha\beta},
\label{eq:lg-schouten}
\end{equation}
one finds
\begin{equation}
\Gamma_{\mu\alpha\beta}=\bar\Gamma_{\mu\alpha\beta}+N_{\mu\alpha\beta}
=\tfrac12\bigl(\partial_{\{\mu}g_{\alpha\beta\}}-T_{\{\mu\alpha\beta\}}-Q_{\{\mu\alpha\beta\}}\bigr),
\label{eq:lg-gamma}
\end{equation}
with the distortion $N^\rho{}_{\mu\nu}=-K^\rho{}_{\mu\nu}-L^\rho{}_{\mu\nu}$ split into contortion and disformation as in Sec.~\ref{ssec:conventions}. 

The curvature decomposes into its Levi-Civita part plus distortion terms,
\begin{equation}
R^\rho{}_{\sigma\mu\nu}=\bar R^\rho{}_{\sigma\mu\nu}
+2\,\bar\nabla_{[\mu}N^\rho{}_{|\sigma|\nu]}
+2\,N^\rho{}_{\lambda[\mu}N^\lambda{}_{|\sigma|\nu]},
\label{eq:lg-riemann-decomp}
\end{equation}
with Ricci scalar
\begin{equation}
R=\bar R+g^{\sigma\nu}\Bigl(2\,\bar\nabla_{[\mu}N^\mu{}_{|\sigma|\nu]}+2\,N^\mu{}_{\lambda[\mu}N^\lambda{}_{|\sigma|\nu]}\Bigr).
\label{eq:lg-ricci-decomp}
\end{equation}
Teleparallelism imposes $R^\rho{}_{\sigma\mu\nu}=0$, so Eq.~\eqref{eq:lg-ricci-decomp} solves the pseudo-Riemannian Ricci scalar $\bar R$ in terms of the distortion. 

Metric teleparallelism has $Q_{\rho\mu\nu}=0$, leaving torsion only, and
\begin{equation}
\bar R=2\,\bar\nabla_\mu T^\mu+\mathbb T,
\qquad
\mathbb T=-\tfrac14 T_{\rho\mu\nu}T^{\rho\mu\nu}+\tfrac12 T_{\rho\mu\nu}T^{\mu\nu\rho}+T_\mu T^\mu .
\label{eq:lg-Tscalar}
\end{equation}
Symmetric teleparallelism has a symmetric connection, $T^\rho{}_{\mu\nu}=0$, leaving nonmetricity only, and
\begin{equation}
\bar R=\bar\nabla_\alpha\bigl(\widetilde Q^\alpha-Q^\alpha\bigr)+\mathbb Q,
\qquad
\mathbb Q=-\tfrac14 Q_{\rho\mu\nu}Q^{\rho\mu\nu}+\tfrac12 Q_{\rho\mu\nu}Q^{\mu\nu\rho}+\tfrac14 Q_\mu Q^\mu-\tfrac12 Q_\mu\widetilde Q^\mu,
\label{eq:lg-Qscalar}
\end{equation}
with the traces $Q_\rho=Q_{\rho\lambda}{}^\lambda$ and $\widetilde Q_\mu=Q_{\lambda\mu}{}^\lambda$ of Sec.~\ref{ssec:conventions}. The scalars $\mathbb T$ and $\mathbb Q$ are quadratic in their field strengths. Up to the total derivatives in Eqs.~\eqref{eq:lg-Tscalar} and \eqref{eq:lg-Qscalar}, they are the MTEGR and STEGR Lagrangians. Quadratic dependence permits a field-strength Legendre map. The explicit STEGR and MTEGR Hessians below are nondegenerate on their declared field-strength fibers.

Between the DDW polyvelocity $\partial_\mu\phi^A$ and a general field strength sits the covariant polyvelocity $\nabla_\mu\phi^A$, which Utiyama's treatment of gauge-invariant interactions already singles out \cite{Utiyama}.
An Ehresmann connection on $\pi:Y\to X$ turns the jet bundle $J^1Y$ into the covariant polyvelocity bundle $T^*X\otimes_Y VY$, the fiberwise tensor product over $Y$ of the pulled-back cotangent bundle with the vertical bundle $VY=\ker T\pi$ \cite{Sardanashvily:1994fg}.
Nonmetricity $Q_{\rho\mu\nu}=\nabla_\rho g_{\mu\nu}$ is at once the covariant polyvelocity and the field strength of the metric, so the covariant momentum and the momentum field strength coincide in STEGR. Torsion and curvature are two-forms instead of covariant derivatives of their potentials, so the general field-strength bundle of Sec.~\ref{ssec:fsform} is required for them. This distinction returns in the MTEGR construction of Sec.~\ref{ssec:mtegr}.

The flat connection carries no local dynamics. On a contractible patch, flatness and torsionlessness are solved by four St\"uckelberg functions $\zeta^\alpha$,
\begin{equation}
\Gamma^\rho{}_{\mu\nu}=\frac{\partial x^\rho}{\partial\zeta^\alpha}\,\partial_\mu\partial_\nu\zeta^\alpha,
\label{eq:stueckelberg}
\end{equation}
inertial gauge data instead of propagating fields. The field-strength formulation carries this connection covariantly, without gauge-fixing it away. The covariant bulk Hamilton equations and their Einstein-form identities hold for any fixed flat torsionless $\Gamma$, with the covariant derivative absorbing the inertial connection. The later reductions require their stated additional hypotheses. In particular, the coincident gauge $\Gamma^\rho{}_{\mu\nu}=0$ (i.e.\ $\zeta^\alpha=x^\alpha$, where $Q_{\rho\mu\nu}=\partial_\rho g_{\mu\nu}$) enters in the compact canonical reduction. Globally, it exists only when the flat connection has trivial holonomy.

The local presentation separates the metric-affine target bundle into
three field-strength summands. The local field strengths
$(\Xi_{\mu ab},\Theta_{\mu\nu}{}^a,\Omega_{\mu\nu}{}^a{}_b)$ take
values in
\begin{align}
\mathcal M_Q=T^*X\otimes\Sym^2(\mathbb R^{1,3}),\qquad
\mathcal M_T&=\Lambda^2T^*X\otimes\mathbb R^{1,3},\qquad
\mathcal M_R=\Lambda^2T^*X\otimes\mathfrak g,
\nonumber \\
\mathcal M_{\rm MAG}&=\mathcal M_Q\oplus\mathcal M_T\oplus\mathcal M_R,
\label{eq:lg-trinity}
\end{align}
the local realization of $E_{\rm MAG}$ from Sec.~\ref{ssec:fsform}. The gauge algebra is $\mathfrak g=\mathfrak{gl}(4,\mathbb R)$ for the local connection $\omega_\mu{}^a{}_b$. The local metric $g_{ab}$ is nontrivial exactly when $\mathfrak g$ contains generators outside $\mathfrak{so}(1,3)$. The frame field $e^a{}_\mu$ is tied to the translations of the affine group. The local teleparallel theories gauge the translations $T_4$ for MTEGR and the shear and scale transformations in $GL(4,\mathbb R)/SO(1,3)$ for STEGR \cite{HEHL19951,obukhov2003metric}. The most general gauge group for metric-affine gravity is therefore the 20-parameter affine linear group $AL(4,\mathbb{R})$. A section over $X$ carries the coordinates
\begin{equation}
x^\mu\;\longmapsto\;\bigl(x^\mu,\;g_{ab},\;e^a{}_\mu,\;\omega_\mu{}^a{}_b,\;\Xi_{\mu ab},\;\Theta_{\mu\nu}{}^a,\;\Omega_{\mu\nu}{}^a{}_b\bigr).
\label{eq:lg-coords}
\end{equation}
The momentum field strengths conjugate to the local trinity are Hehl and Obukhov's excitations \cite{Hehl2003,Puetzfeld:2014qba,Obukhov:2018bmf} valued in the dual bundle $\mathcal M^*_{\rm MAG}$,
\begin{equation}
\pi^{\rho ab}=\frac{\partial\sqrt{-g}\,\Lag}{\partial\Xi_{\rho ab}},\qquad
\rho^{\mu\nu}{}_a=\frac{\partial\sqrt{-g}\,\Lag}{\partial\Theta_{\mu\nu}{}^a},\qquad
\sigma^{\mu\nu}{}_a{}^b=\frac{\partial\sqrt{-g}\,\Lag}{\partial\Omega_{\mu\nu}{}^a{}_b}. 
\label{eq:lg-excitations}
\end{equation}
Note that $\sqrt{-g}=|e|\sqrt{-\det(g_{ab})}$. Variational calculus gives the metric-affine Euler--Lagrange equations
\begin{align}
D_\rho\pi^{\rho ab}&=\frac{\partial\sqrt{-g}\,\Lag}{\partial g_{ab}}, \label{eq:lg-local-el-g}\\
D_\mu\rho^{\mu\nu}{}_a&=\frac12\frac{\partial\sqrt{-g}\,\Lag}{\partial e^a{}_\nu}, \label{eq:lg-local-el-e}\\
D_\mu\sigma^{\mu\nu}{}_a{}^b+\pi^{\nu bc}g_{ac}-\rho^{\nu\mu}{}_a\,e^b{}_\mu&=\frac12\frac{\partial\sqrt{-g}\,\Lag}{\partial\omega_\nu{}^a{}_b}. \label{eq:lg-local-el-w}
\end{align}
The fiberwise Legendre transform of Sec.~\ref{ssec:fsform} then defines the local metric-affine Hamiltonian density
\begin{equation}
\sqrt{-g}\,\Ham_{\rm MAG}
=\Xi_{\rho ab}\,\pi^{\rho ab}+\Theta_{\mu\nu}{}^a\,\rho^{\mu\nu}{}_a+\Omega_{\mu\nu}{}^a{}_b\,\sigma^{\mu\nu}{}_a{}^b-\sqrt{-g}\,\Lag,
\label{eq:lg-mag-ham}
\end{equation}
whose Hamilton equations split into three kinematical and three dynamical sets,
\begin{align}
\Xi_{\mu ab}=\frac{\partial(\sqrt{-g}\Ham)}{\partial\pi^{\mu ab}},&\qquad
D_\mu\pi^{\mu ab}=-\frac{\partial(\sqrt{-g}\Ham)}{\partial g_{ab}}, \label{eq:lg-local-ham-g}\\
\Theta_{\mu\nu}{}^a=\frac{\partial(\sqrt{-g}\Ham)}{\partial\rho^{\mu\nu}{}_a},&\qquad
2D_\mu\rho^{\mu\nu}{}_a=-\frac{\partial(\sqrt{-g}\Ham)}{\partial e^a{}_\nu}, \label{eq:lg-local-ham-e}\\
\Omega_{\mu\nu}{}^a{}_b=\frac{\partial(\sqrt{-g}\Ham)}{\partial\sigma^{\mu\nu}{}_a{}^b},&\qquad
D_\mu\sigma^{\mu\nu}{}_a{}^b+\pi^{\nu bc}g_{ac}-\rho^{\nu\mu}{}_a e^b{}_\mu
=-\frac12\frac{\partial(\sqrt{-g}\Ham)}{\partial\omega_\nu{}^a{}_b}. \label{eq:lg-local-ham-w}
\end{align}
Here $D_\mu$ is the gauge-covariant divergence of the weight-one excitation density. The factors of two in the torsion and curvature equations follow from their antisymmetric derivative slots. In particular, the connection equation contains the metric and coframe excitation currents. In the teleparallel sectors the curvature equations in Eqs.~\eqref{eq:lg-local-el-w} and \eqref{eq:lg-local-ham-w} carry little content, since the spin connection is not dynamical there. Lagrange multipliers enforcing flatness are needed only in the global formulation, as discussed below.

\subsection{Global metric-affine gravity}
The global presentation has two potentials $\phi^A=(g_{\mu\nu},\Gamma^\rho{}_{\mu\nu})$ and three field strengths $F_\mu{}^A=(Q_{\rho\mu\nu},T^\rho{}_{\mu\nu},R^\rho{}_{\sigma\mu\nu})$. The variation of the Lagrangian density isolates sources and momenta,
\begin{align}
\delta L(g,\Gamma,Q,T,R)
={}&-\frac12\sqrt{-g}\bigl(T^{\mu\nu}\delta g_{\mu\nu}
+\Delta_\rho{}^{\mu\nu}\delta\Gamma^\rho{}_{\mu\nu}\bigr) \nonumber\\
&+\Pi^{\rho\mu\nu}\delta Q_{\rho\mu\nu}
+P_\rho{}^{\mu\nu}\delta T^\rho{}_{\mu\nu} \nonumber\\
&+\Sigma_\rho{}^{\sigma\mu\nu}\delta R^\rho{}_{\sigma\mu\nu},
\label{eq:lg-global-var}
\end{align}
where the stress tensor and hypermomentum are the current-density sources
\begin{equation}
T^{\mu\nu}=-\frac{2}{\sqrt{-g}}\frac{\partial\sqrt{-g}\,\Lag}{\partial g_{\mu\nu}},
\qquad
\Delta_\rho{}^{\mu\nu}=-\frac{2}{\sqrt{-g}}\frac{\partial\sqrt{-g}\,\Lag}{\partial\Gamma^\rho{}_{\mu\nu}},
\label{eq:lg-sources}
\end{equation}
and the momentum field strengths, or excitations, are
\begin{equation}
\Pi^{\rho\mu\nu}=\frac{\partial\sqrt{-g}\,\Lag}{\partial Q_{\rho\mu\nu}},\qquad
P_\rho{}^{\mu\nu}=\frac{\partial\sqrt{-g}\,\Lag}{\partial T^\rho{}_{\mu\nu}},\qquad
\Sigma_\rho{}^{\sigma\mu\nu}=\frac{\partial\sqrt{-g}\,\Lag}{\partial R^\rho{}_{\sigma\mu\nu}}.
\label{eq:lg-global-momenta}
\end{equation}
The curvature momentum $\Sigma_\rho{}^{\sigma\mu\nu}$ is distinguished from a hypersurface $\Sigma$ or $\Sigma_s$ by its indices.

Torsion is not the derivative of a global potential. Isolating $\Gamma$ also replaces the manifestly covariant expression $Q(g,\nabla g)$ by $Q(g,\partial g,\Gamma)$, admitting the non-tensorial connection into the configuration space. The field strengths and their momenta in Eq.~\eqref{eq:lg-global-momenta} nevertheless remain tensors.

With matter, the connection dependence of $\Lag_{\rm matter}(\psi,\nabla\psi,g)$ sources the hypermomentum $\Delta_\rho{}^{\mu\nu}$. The torsion momentum $P_\rho{}^{\mu\nu}$ then shares information with this matter current. Obukhov and Hehl analyze the corresponding hyperfluid sector in Ref.~\cite{Obukhov:2023yti}. We restrict the following analysis to vacuum gravity.

Expanding the field-strength variations in $(\delta g,\delta\Gamma,\delta\partial g,\delta\partial\Gamma)$ and integrating ordinary coordinate derivatives by parts yields the global field-strength Euler--Lagrange equations
\begin{align}
\partial_\rho\Pi^{\rho\mu\nu}
+2\Gamma^{(\mu}{}_{\lambda\rho}\Pi^{\rho|\lambda|\nu)}
-\frac{\partial L}{\partial g_{\mu\nu}}&=0, \label{eq:lg-global-el-g}\\
\partial_\lambda\Sigma_\rho{}^{\mu\lambda\nu}
-\Sigma_\rho{}^{\beta\nu\lambda}\Gamma^\mu{}_{\beta\lambda}
-\Sigma_\alpha{}^{\mu\lambda\nu}\Gamma^\alpha{}_{\rho\lambda}
+\Pi^{\nu\mu\beta}g_{\rho\beta}-P_\rho{}^{\mu\nu}
-\frac12\frac{\partial L}{\partial\Gamma^\rho{}_{\mu\nu}}&=0,
\label{eq:lg-global-el-G}
\end{align}
valid for arbitrary metric-affine theories, including actions of higher polynomial order in the field strengths. These coordinate expressions are unambiguous for the weight-one excitations and the slot convention in Eq.~\eqref{eq:connection-slot-convention}. They are the direct formal adjoints of $\delta Q$ and $\delta R$. In particular, the connection equation contains the nonmetricity current $\Pi^{\nu\mu\beta}g_{\rho\beta}$ as well as the torsion current. The corresponding canonical action
\begin{equation}
S(g,\Gamma,\Pi,P,\Sigma)=\int d^4x\Bigl(Q_{\rho\mu\nu}\Pi^{\rho\mu\nu}+T^\rho{}_{\mu\nu}P_\rho{}^{\mu\nu}+R^\rho{}_{\sigma\mu\nu}\Sigma_\rho{}^{\sigma\mu\nu}-\sqrt{-g}\,\Ham\Bigr)
\label{eq:lg-global-action}
\end{equation}
delivers five sets of Hamilton equations in total. There are three kinematical equations, 
\begin{equation}
Q_{\rho\mu\nu}=\frac{\partial\sqrt{-g}\,\Ham}{\partial\Pi^{\rho\mu\nu}},\qquad
T^\rho{}_{\mu\nu}=\frac{\partial\sqrt{-g}\,\Ham}{\partial P_\rho{}^{\mu\nu}},\qquad
R^\rho{}_{\sigma\mu\nu}=\frac{\partial\sqrt{-g}\,\Ham}{\partial\Sigma_\rho{}^{\sigma\mu\nu}},
\label{eq:lg-global-ham-kin}
\end{equation}
and two dynamical equations, 
\begin{align}
\partial_\rho\Pi^{\rho\mu\nu}
+2\Gamma^{(\mu}{}_{\lambda\rho}\Pi^{\rho|\lambda|\nu)}
&=-\frac{\partial(\sqrt{-g}\Ham)}{\partial g_{\mu\nu}}, \label{eq:lg-global-ham-g}\\
\partial_\lambda\Sigma_\rho{}^{\mu\lambda\nu}
-\Sigma_\rho{}^{\beta\nu\lambda}\Gamma^\mu{}_{\beta\lambda}
-\Sigma_\alpha{}^{\mu\lambda\nu}\Gamma^\alpha{}_{\rho\lambda}
+\Pi^{\nu\mu\beta}g_{\rho\beta}-P_\rho{}^{\mu\nu}
&=-\frac12\frac{\partial(\sqrt{-g}\Ham)}{\partial\Gamma^\rho{}_{\mu\nu}}. \label{eq:lg-global-ham-G}
\end{align}
The antisymmetric part of Eq.~\eqref{eq:lg-global-ham-G} in $[\mu\nu]$ contains the torsion momentum together with antisymmetric parts of the curvature-excitation divergence and nonmetricity current. It isolates torsion only when those other contributions vanish or have been fixed separately.

One caveat governs the global teleparallel subsectors. Restricting global metric-affine geometry to metric teleparallelism, say, requires Lagrange multipliers enforcing vanishing curvature. Without them the correct equations of motion are not obtained. This traces back to torsion not being the derivative of a global potential. The local formulation avoids the complication, since the frame field acts as a conventional potential for torsion, and it is the route taken for MTEGR in Sec.~\ref{ssec:mtegr}.
The Hamiltonian formulation of MTEGR has been explored extensively by Maluf and collaborators \cite{Maluf:1994ji,Maluf:2000zt}.

\section{Gauge-covariant Hamiltonians for the teleparallel equivalents}
\label{sec:teleparallel}

\subsection{STEGR field equations}
\label{ssec:stegr}

The bulk STEGR action obtained after dropping the divergence in Eq.~\eqref{eq:stegr-action} is a single quadratic invariant of nonmetricity. In terms of the superpotential $\mathfrak P^{\rho\mu\nu}$ defined in Eq.~\eqref{eq:stegr-superpotential},
\begin{equation}
S_{\textrm{STEGR}}\approx-\frac{1}{\kappa^2}\int d^4x\,\sqrt{-g}\;Q_{\rho\mu\nu}\mathfrak P^{\rho\mu\nu}.
\label{eq:stegr-QP}
\end{equation}

The nonmetricity superpotential in Eq.~\eqref{eq:stegr-superpotential} carries the conserved current and the quasilocal energy in symmetric teleparallelism \cite{BeltranJimenez:2019tjy}. The field-strength formulation gives it a second role. Treating nonmetricity as the generalized velocity of the metric, the momentum field strength conjugate to $Q$ is, up to the density factor, the superpotential itself,
\begin{equation}
\Pi^{\rho\mu\nu}=\frac{\partial\sqrt{-g}\,\Lag}{\partial Q_{\rho\mu\nu}}=k\,\mathfrak P^{\rho\mu\nu},\qquad k:=-\frac{2\sqrt{-g}}{\kappa^2}.
\label{eq:stegr-momentum}
\end{equation}
The polymomentum conjugate to nonmetricity is the teleparallel superpotential, the gravitational excitation of the dictionary of Table~\ref{tab:DDW_MAG}.

Recovering nonmetricity from its momentum is the inversion of Eq.~\eqref{eq:stegr-momentum}, and its two independent traces do most of the work. Contracting,
\begin{equation}
\Pi^\rho=g_{\mu\nu}\Pi^{\rho\mu\nu}=k\bigl(Q^\rho-\widetilde Q^\rho\bigr),
\qquad
\widetilde\Pi^\mu=g_{\rho\nu}\Pi^{\rho\mu\nu}=k\Bigl(-\tfrac12\widetilde Q^\mu-\tfrac14 Q^\mu\Bigr),
\label{eq:app-traces}
\end{equation}
an invertible $2\times2$ system for the trace vectors $(Q^\rho,\widetilde Q^\rho)$. Solving it and combining with the cyclic sum $\Pi_{\mu\nu\rho}+\Pi_{\nu\mu\rho}$ returns the full nonmetricity,
\begin{equation}
Q_{\rho\mu\nu}=k^{-1}\Bigl[\Pi_{\mu\nu\rho}+\Pi_{\nu\mu\rho}+\tfrac13\bigl(g_{\mu\nu}(\Pi_\rho-2\widetilde\Pi_\rho)-g_{\rho\mu}(\Pi_\nu+\widetilde\Pi_\nu)-g_{\rho\nu}(\Pi_\mu+\widetilde\Pi_\mu)\bigr)\Bigr].
\label{eq:C-action}
\end{equation}
That this inversion is unobstructed on the whole nonmetricity fiber is the one algebraic fact the construction rests on.

\begin{theorem}[invertibility of the STEGR polysupermetric]
\label{thm:inverse}
Write the momentum map in Eq.~\eqref{eq:stegr-momentum} as $\Pi=k\,\mathcal B Q$ and its inverse in Eq.~\eqref{eq:C-action} as $Q=k^{-1}\mathcal C\,\Pi$. On the $40$-dimensional field-strength fiber $V=T^*M\otimes\Sym^2T^*M$, $\mathcal C\mathcal B=\mathcal B\mathcal C=\id_V$. After using the metric musical map to identify the Hessian map $V\to V^*$ with an endomorphism of $V$, the double-trace-free part of $\mathcal B$ has eigenvalues $\tfrac12$ and $-1$, while the trace part has a $2\times2$ trace-mixing matrix of determinant $-\tfrac34$. The field-strength Hessian has no zero modes, so no gauge fixing is needed to invert this map. At a Lorentzian metric, $\operatorname{Hess}\mathbb Q$ has inertia $(22_+,18_-)$ and the physical Hessian $M=k\mathcal B$ has inertia $(18_+,22_-)$. With unrestricted indices the two inverse contractions equal the symmetric-pair projector
\begin{equation}
M_{\rho\mu\nu|\sigma\alpha\beta}\,
\bigl(M^{-1}\bigr)^{\sigma\alpha\beta|\tau\gamma\delta}
=\delta_\rho^\tau\,\delta^{\gamma}_{(\mu}\delta^{\delta}_{\nu)},
\qquad M=k\mathcal B,\quad M^{-1}=k^{-1}\mathcal C.
\label{eq:projector}
\end{equation}
\end{theorem}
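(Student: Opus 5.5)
The plan is to diagonalize $\mathcal B$ by splitting the fiber $V=T^*M\otimes\Sym^2T^*M$ into irreducible pieces under the pointwise Lorentz group. Then I would check that $\mathcal C$ acts on each piece by the reciprocal eigenvalue, so that $\mathcal C\mathcal B=\mathcal B\mathcal C=\id_V$ holds piece by piece. I would write $Q_{\rho\mu\nu}=q_{\rho\mu\nu}+(\text{terms in }Q_\rho,\widetilde Q_\rho)$, where $q$ has all traces zero. Both trace vectors are needed, since $Q$ has two independent traces. This gives $V=V_{\rm tr}\oplus V_0$ with $\dim V_{\rm tr}=8$ and $\dim V_0=32$.

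The double-trace-free space further splits as $V_0=V_S\oplus V_H$. Here $V_S$ is the totally symmetric traceless part, with $\dim V_S=20-4=16$. The complement $V_H$ is cut out by the cyclic condition $q_{\rho\mu\nu}+q_{\mu\nu\rho}+q_{\nu\rho\mu}=0$, and has $\dim V_H=16$. On $V_0$ the trace terms in \eqref{eq:stegr-superpotential} drop out, leaving $\mathcal Bq=-\tfrac12 q^{\rho\mu\nu}+\tfrac12(q^{\mu\nu\rho}+q^{\nu\mu\rho})$.
\begin{itemize}
\item On $V_S$ this gives $-\tfrac12+1=\tfrac12$.
\item On $V_H$, symmetry in the last pair gives $q_{\nu\mu\rho}=q_{\nu\rho\mu}$. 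The cyclic identity then turns the bracket into $-q_{\rho\mu\nu}$, so the eigenvalue is $-\tfrac12-\tfrac12=-1$.
\end{itemize}
The traces of $V_0$ are preserved, which justifies the block decomposition.

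On $V_{\rm tr}$ I would use \eqref{eq:app-traces}. The map $(Q,\widetilde Q)\mapsto(Q-\widetilde Q,\,-\tfrac14Q-\tfrac12\widetilde Q)$ has matrix $\begin{pmatrix}1&-1\\-\tfrac14&-\tfrac12\end{pmatrix}$, whose determinant is $-\tfrac12-\tfrac14=-\tfrac34$. I also need to confirm that the trace-carrying part of $Q$ is mapped back into the trace-carrying subspace; I would do this by writing out the vector-built tensors $g_{\mu\nu}v_\rho$ and $g_{\rho(\mu}v_{\nu)}$ explicitly.

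Invertibility then follows immediately: the eigenvalues $\tfrac12$ and $-1$ are nonzero and the determinant is nonzero, so there are no zero modes. For the identity $\mathcal C\mathcal B=\id_V$, I would apply \eqref{eq:C-action} separately on each block.
\begin{itemize}
\item On $V_S$, $\Pi_{\mu\nu\rho}+\Pi_{\nu\mu\rho}=2\Pi$, giving the reciprocal $2$.
\item On $V_H$ it gives $-\Pi_{\rho\mu\nu}$, the reciprocal $-1$.
\item On traces, I would check that the $\tfrac13$-terms reproduce the inverse $2\times2$ matrix.
\end{itemize}
Since the blocks are invariant, $\mathcal B\mathcal C=\id_V$ follows as well. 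The projector form \eqref{eq:projector} is the same identity written with unrestricted indices. Because $V$ only contains the $(\mu\nu)$-symmetric part, the identity map on it appears as $\delta_\rho^\tau\delta^\gamma_{(\mu}\delta^\delta_{\nu)}$.

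The main obstacle is the inertia count. The quadratic form is $\mathbb Q=\tfrac12\langle Q,\mathcal BQ\rangle$, where $\langle\cdot,\cdot\rangle$ is the metric pairing on $V$. The blocks are orthogonal under this pairing, so the inertia of $\mathbb Q$ is
\[
\operatorname{sgn}(\tfrac12)\cdot\mathrm{inertia}(\langle\,,\rangle|_{V_S})
+\operatorname{sgn}(-1)\cdot\mathrm{inertia}(\langle\,,\rangle|_{V_H})
+\mathrm{inertia}(\text{trace block}).
\]
To count, I would work in an orthonormal frame, where a component is negative exactly when it carries an odd number of time indices.
\begin{itemize}
\item For the full $V$ this gives $(24_+,16_-)$.
\item For the vector traces, $V_{\rm tr}$ carries the form $(3_+,1_-)\otimes$(a $2\times2$ Gram matrix). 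I would compute that Gram matrix on the span of $\{Q,\widetilde Q\}$ together with $\mathcal B$ on it; this yields two $2\times2$ symmetric forms and their signatures.
\item For $V_S$ (traceless $\Sym^3$), I would count $\Sym^3$ as $(12_+,8_-)$ and subtract the vector trace $(3_+,1_-)$, giving $(9_+,7_-)$. Then $V_H$ gets the remainder of $V_0$.
\end{itemize}
The target is $(22_+,18_-)$ for $\mathbb Q$. Multiplying by $k<0$ flips it to $(18_+,22_-)$ for $M$, which is the stated result. The sign bookkeeping of the trace block and of the subtractions is where I expect errors, so I would cross-check by evaluating $\mathbb Q$ on explicit basis tensors in Minkowski space.
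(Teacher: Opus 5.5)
Your decomposition, the eigenvalues $\tfrac12$ and $-1$, the trace block with determinant $-\tfrac34$, and the blockwise inversion are the paper's argument, so that part is sound. The paper phrases your reciprocal check on $V_0$ as the identity $\mathsf A^2=\mathsf A+2\,\id$ for the cycling operator $(\mathsf AX)_{\rho\mu\nu}=X_{\mu\nu\rho}+X_{\nu\mu\rho}$, and then verifies $\mathcal C\mathcal BQ=Q$ directly on all of $V$; this is equivalent to what you do. The genuine problem is in the step you identify as the main obstacle: $\Sym^3$ of a Lorentzian four-space does not have signature $(12_+,8_-)$. Apply your own rule that sign is $(-1)^{\#\text{time indices}}$ on monomial components. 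The $20$ components split as $10$ with no time index, $6$ with one, $3$ with two, and $1$ with three, so $\Sym^3$ is $(13_+,7_-)$. The trace embedding $v\mapsto g_{(\mu\nu}v_{\rho)}$ has norm $2\,v_\rho v^\rho$, so it removes a $(3_+,1_-)$ piece. Hence $\Sym^3_0T^*M$ is $(10_+,6_-)$, not $(9_+,7_-)$.

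The error propagates through your remainder step, so the final count fails as written. The natural Gram matrix of the two trace embeddings $g_{\mu\nu}A_\rho$ and $g_{\rho(\mu}B_{\nu)}$ is $\begin{pmatrix}4&1\\1&5/2\end{pmatrix}$. It is positive definite, so $V_{\rm tr}$ carries $(6_+,2_-)$ and $V_0$ carries $(18_+,14_-)$. The restriction of $\mathbb Q$ to the trace block has a $2\times2$ internal Hessian of negative determinant, equal to $-\tfrac1{12}$ in the variables $(Q_\rho,\widetilde Q_\rho)$. It therefore contributes $(4_+,4_-)$. With your value, $V_H$ would be $(9_+,7_-)$ and the trace-free part of $\mathbb Q$ would be $(9,7)+(7,9)=(16,16)$. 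The total would then be $(20_+,20_-)$, which contradicts the theorem. With the correct value, $[2,1]_0$ is $(8_+,8_-)$ and the trace-free part is $(10,6)+(8,8)=(18,14)$. The total is then $(22_+,18_-)$, which becomes $(18_+,22_-)$ after multiplication by $k<0$. So replace $(12,8)$ by $(13,7)$ and carry out the two trace-block computations you only sketched; your argument then coincides with the paper's.
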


The proof is the block decomposition in App.~\ref{app:inverse}. This fiberwise Legendre regularity is logically separate from gauge redundancy and from the Dirac question of which components propagate, taken up in Sec.~\ref{ssec:legendre}. The trace symmetries of $Q$ do not force zero modes. They fix the domain and the symmetric-pair projector in Eq.~\eqref{eq:projector}.

The Hamiltonian density is now derived, not posited. The theory is purely kinetic, $\sqrt{-g}\,\Lag=\tfrac12 Q_{\rho\mu\nu}\Pi^{\rho\mu\nu}$, so the Legendre transform $\sqrt{-g}\,\Ham=\Pi^{\rho\mu\nu}Q_{\rho\mu\nu}-\sqrt{-g}\,\Lag$ equals $\sqrt{-g}\,\Lag$ on shell. Substituting the inverse in Eq.~\eqref{eq:C-action} expresses it in the momenta,
\begin{equation}
\sqrt{-g}\,\Ham_{\textrm{STEGR}}
=-\frac{\kappa^2}{2\sqrt{-g}}\left(\Pi^{\rho\mu\nu}\Pi_{\mu\nu\rho}+\tfrac16\Pi^\mu\Pi_\mu-\tfrac23\Pi^\mu\widetilde\Pi_\mu-\tfrac13\widetilde\Pi^\mu\widetilde\Pi_\mu\right),
\label{eq:stegr-ham}
\end{equation}
quadratic in the momentum field strength with the inverse polysupermetric as its coefficient. With matter or a cosmological constant the corresponding potential enters with the opposite sign (Sec.~\ref{ssec:legendre}). In vacuum Eq.~\eqref{eq:stegr-ham} is homogeneous of degree two in $\Pi$.

The field-strength Hamilton equations in Eqs.~\eqref{eq:fs-hamilton} are now concrete. The first returns nonmetricity directly from the phase-space variables,
\begin{equation}
Q_{\rho\mu\nu}=\frac{\partial(\sqrt{-g}\,\Ham_{\textrm{STEGR}})}{\partial\Pi^{\rho\mu\nu}}
=-\frac{\kappa^2}{2\sqrt{-g}}(\mathcal C\Pi)_{\rho\mu\nu},
\label{eq:stegr-first-hamilton}
\end{equation}
where $\mathcal C\Pi$ is the bracket displayed explicitly in Eq.~\eqref{eq:C-action}. The second Hamilton equation is
\begin{equation}
\nabla_\rho\Pi^{\rho\mu\nu}=-\frac{\partial(\sqrt{-g}\,\Ham_{\textrm{STEGR}})}{\partial g_{\mu\nu}}\bigg|_\Pi+\frac{\partial(\sqrt{-g}\,\Lag_{\rm m})}{\partial g_{\mu\nu}},
\label{eq:stegr-eom}
\end{equation}
with $\nabla$ the flat torsionless inertial connection. Thus the equation of motion is expressed entirely in the canonical pair $(g_{\mu\nu},\Pi^{\rho\mu\nu})$. For the weight-one momentum density the connection-trace terms cancel, leaving
\begin{equation}
\nabla_\rho\Pi^{\rho\mu\nu}
=\partial_\rho\Pi^{\rho\mu\nu}+\Gamma^\mu{}_{\lambda\rho}\Pi^{\rho\lambda\nu}+\Gamma^\nu{}_{\lambda\rho}\Pi^{\rho\mu\lambda},
\label{eq:divPi}
\end{equation}
so the coincident-gauge form looks connection-free while the covariant statement carries the two surviving index terms.

\begin{samepage}
The fixed-momentum derivative on the right side of Eq.~\eqref{eq:stegr-eom} can be followed directly from Eq.~\eqref{eq:stegr-ham}. Write $\delta_\Pi$ for a metric variation with the contravariant components $\Pi^{\rho\mu\nu}$ held fixed. The required elementary variations are
\begin{align}
\delta_\Pi\!\left(\frac{1}{\sqrt{-g}}\right)
&=-\frac{1}{2\sqrt{-g}}g^{\alpha\beta}\delta g_{\alpha\beta},\\
\delta_\Pi\Pi^\rho
&=\Pi^{\rho\alpha\beta}\delta g_{\alpha\beta},
\qquad
\delta_\Pi\widetilde\Pi^\mu
=\Pi^{(\alpha|\mu|\beta)}\delta g_{\alpha\beta},\\
\delta_\Pi\Pi_{\alpha\beta\rho}
&=\Pi^\lambda{}_{\beta\rho}\delta g_{\alpha\lambda}
+\Pi_\alpha{}^\lambda{}_\rho\delta g_{\beta\lambda}
+\Pi_{\alpha\beta}{}^\lambda\delta g_{\rho\lambda}.
\label{eq:fixed-Pi-variations}
\end{align}
\end{samepage}
At fixed $\Pi$, the metric enters through the density factor, the lowered momentum slots, and the trace contractions. Applying Eq.~\eqref{eq:fixed-Pi-variations} to these terms defines the symmetric tensor $\mathcal V^{\mu\nu}$.

\begin{samepage}
Define
\begin{align}
\mathcal V^{\mu\nu}[\Pi,g]
={}&\tfrac14\Pi^{\alpha\mu\beta}\Pi_\beta{}^\nu{}_\alpha
+\tfrac12\Pi^{(\mu|\alpha\beta|}\Pi_\alpha{}^{\nu)}{}_\beta
+\tfrac1{24}\Pi^\mu\Pi^\nu-\tfrac1{12}\widetilde\Pi^\mu\widetilde\Pi^\nu
-\tfrac16\Pi^{(\mu}\widetilde\Pi^{\nu)} \nonumber\\
&+\tfrac1{12}\Pi_\alpha\Pi^{\alpha\mu\nu}
-\tfrac16\widetilde\Pi_\alpha\Pi^{\alpha\mu\nu}
-\tfrac16\Pi_\alpha\Pi^{(\mu\nu)\alpha}
-\tfrac16\widetilde\Pi_\alpha\Pi^{(\mu\nu)\alpha} \nonumber\\
&+g^{\mu\nu}\left(-\tfrac18\Pi^{\rho\alpha\beta}\Pi_{\alpha\beta\rho}
-\tfrac1{48}\Pi^\rho\Pi_\rho+\tfrac1{24}\widetilde\Pi^\rho\widetilde\Pi_\rho
+\tfrac1{12}\Pi^\rho\widetilde\Pi_\rho\right).
\label{eq:stegr-VPi}
\end{align}
\end{samepage}
The direct fixed-$\Pi$ differentiation is therefore
\begin{equation}
\frac{\partial(\sqrt{-g}\,\Ham_{\textrm{STEGR}})}{\partial g_{\mu\nu}}\bigg|_\Pi
=-\frac{2\kappa^2}{\sqrt{-g}}\,\mathcal V^{\mu\nu}[\Pi,g].
\label{eq:app-dHdg}
\end{equation}
The pure-trace terms in the last line of Eq.~\eqref{eq:stegr-VPi} come from the first variation in Eq.~\eqref{eq:fixed-Pi-variations}. They combine to $-\tfrac18g^{\mu\nu}$ times the Hamiltonian bilinear. 

Substitution of the first Hamilton equation into the fixed-$\Pi$ result expresses the metric derivative in nonmetricity variables without invoking a Lagrangian metric variation. After lowering the free indices, define
\begin{equation}
\mathcal H_{\mu\nu}[Q]
:=-\frac{\kappa^2}{2\sqrt{-g}}g_{\mu\alpha}g_{\nu\beta}
\frac{\partial(\sqrt{-g}\,\Ham_{\textrm{STEGR}})}{\partial g_{\alpha\beta}}\bigg|_\Pi
=\frac{\kappa^4}{-g}g_{\mu\alpha}g_{\nu\beta}\mathcal V^{\alpha\beta}[\Pi,g].
\label{eq:stegr-HQ-def}
\end{equation}
\begin{samepage}
Substitution of $\Pi=k\mathfrak P[Q]$ into every term of Eq.~\eqref{eq:stegr-VPi} gives, directly,
\begin{align}
\mathcal H_{\mu\nu}[Q]={}&-\tfrac12\mathbb Qg_{\mu\nu}
-\tfrac14Q_{\mu\alpha\beta}Q_\nu{}^{\alpha\beta}
-\tfrac12Q_{\alpha\mu\beta}Q^\alpha{}_{\nu}{}^\beta
+\tfrac12Q_{\alpha\mu\beta}Q^\beta{}_{\nu}{}^\alpha \nonumber\\
&+Q_{(\mu|\alpha\beta|}Q^\alpha{}_{\nu)}{}^\beta
+\tfrac14Q_\mu Q_\nu-\tfrac12Q_{(\mu}\widetilde Q_{\nu)} \nonumber\\
&+\tfrac12Q^\alpha Q_{\alpha\mu\nu}-\tfrac12\widetilde Q^\alpha Q_{\alpha\mu\nu}
-\tfrac12Q^\alpha Q_{(\mu\nu)\alpha}.
\label{eq:stegr-HQ}
\end{align}
\end{samepage}
Equation~\eqref{eq:stegr-HQ} is the fixed-momentum Hamiltonian metric derivative in nonmetricity variables. It was obtained by inserting the first Hamilton equation into Eq.~\eqref{eq:stegr-VPi}, not by replacing the Hamiltonian derivative with a Lagrangian variation.

The divergence in the second Hamilton equation requires a separate index calculation. Since $\nabla_\rho g_{\mu\nu}=Q_{\rho\mu\nu}$, lowering its free indices gives
\begin{align}
\mathcal D_{\mu\nu}[Q]
:={}&-\frac{\kappa^2}{2\sqrt{-g}}g_{\mu\alpha}g_{\nu\beta}
\nabla_\rho\Pi^{\rho\alpha\beta} \nonumber\\
={}&\frac1{\sqrt{-g}}\nabla_\rho\!\left(\sqrt{-g}\,\mathfrak P^\rho{}_{\mu\nu}\right)
-\left(\mathfrak P^{\rho\alpha}{}_{\nu}Q_{\rho\alpha\mu}
+\mathfrak P^{\rho\alpha}{}_{\mu}Q_{\rho\alpha\nu}\right) \nonumber\\
={}&\frac1{\sqrt{-g}}\nabla_\rho\!\left(\sqrt{-g}\,\mathfrak P^\rho{}_{\mu\nu}\right)
+\mathcal C_{\mu\nu}[Q],
\label{eq:stegr-div-lowered}
\end{align}
where expansion of the normalized momentum gives
\begin{align}
\mathcal C_{\mu\nu}[Q]={}&
Q_{\alpha\mu\beta}Q^\alpha{}_{\nu}{}^\beta
-Q_{\alpha\mu\beta}Q^\beta{}_{\nu}{}^\alpha
-Q_{(\mu|\alpha\beta|}Q^\alpha{}_{\nu)}{}^\beta
+\tfrac12Q_{(\mu}\widetilde Q_{\nu)} \nonumber\\
&-Q^\alpha Q_{\alpha\mu\nu}+\widetilde Q^\alpha Q_{\alpha\mu\nu}
+\tfrac12Q^\alpha Q_{(\mu\nu)\alpha}.
\label{eq:stegr-cross-Q}
\end{align}
The two quadratic tensors now combine without any further variation. Adding Eqs.~\eqref{eq:stegr-HQ} and~\eqref{eq:stegr-cross-Q} cancels the three temporary structures $Q_{(\mu|\alpha\beta|}Q^\alpha{}_{\nu)}{}^\beta$, $Q_{(\mu}\widetilde Q_{\nu)}$, and $Q^\alpha Q_{(\mu\nu)\alpha}$ and leaves
\begin{align}
\mathcal A_{\mu\nu}[Q]
:={}&\mathcal H_{\mu\nu}[Q]+\mathcal C_{\mu\nu}[Q] \nonumber\\
={}&-\tfrac12\mathbb Qg_{\mu\nu}
-\tfrac14Q_{\mu\alpha\beta}Q_\nu{}^{\alpha\beta}
+\tfrac12Q_{\alpha\mu\beta}Q^\alpha{}_{\nu}{}^\beta
-\tfrac12Q_{\alpha\mu\beta}Q^\beta{}_{\nu}{}^\alpha \nonumber\\
&+\tfrac14Q_\mu Q_\nu-\tfrac12Q^\alpha Q_{\alpha\mu\nu}
+\tfrac12\widetilde Q^\alpha Q_{\alpha\mu\nu}.
\label{eq:stegr-AQ}
\end{align}
The lowered vacuum Hamilton equation is
\begin{equation}
0=\mathcal D_{\mu\nu}+\mathcal H_{\mu\nu}
=\frac1{\sqrt{-g}}\nabla_\rho\!\left(\sqrt{-g}\,\mathfrak P^\rho{}_{\mu\nu}\right)
+\mathcal A_{\mu\nu}[Q].
\label{eq:stegr-eom-Q}
\end{equation}

The tensor in Eq.~\eqref{eq:stegr-eom-Q} can be identified from the torsion-free distortion relation $N^\rho{}_{\mu\nu}=-L^\rho{}_{\mu\nu}$,
\begin{equation}
\Gamma^\rho{}_{\mu\nu}=\bar\Gamma^\rho{}_{\mu\nu}-L^\rho{}_{\mu\nu},
\qquad
L^\rho{}_{\mu\nu}=Q_{(\mu}{}^\rho{}_{\nu)}-\tfrac12Q^\rho{}_{\mu\nu}.
\label{eq:stegr-disformation}
\end{equation}
Substitution into the curvature definition gives
\begin{align}
R^\rho{}_{\sigma\mu\nu}(\Gamma)={}&\bar R^\rho{}_{\sigma\mu\nu}
-\bar\nabla_\mu L^\rho{}_{\sigma\nu}+\bar\nabla_\nu L^\rho{}_{\sigma\mu}
+L^\rho{}_{\lambda\mu}L^\lambda{}_{\sigma\nu}
-L^\rho{}_{\lambda\nu}L^\lambda{}_{\sigma\mu}.
\label{eq:stegr-riemann-L}
\end{align}
Flatness sets the left side to zero and therefore solves for the Levi-Civita curvature,
\begin{align}
\bar R^\rho{}_{\sigma\mu\nu}={}&
\bar\nabla_\mu L^\rho{}_{\sigma\nu}-\bar\nabla_\nu L^\rho{}_{\sigma\mu}
-L^\rho{}_{\lambda\mu}L^\lambda{}_{\sigma\nu}
+L^\rho{}_{\lambda\nu}L^\lambda{}_{\sigma\mu}.
\label{eq:stegr-Rbar-L}
\end{align}
Contracting the first and third indices gives the Ricci tensor and scalar,
\begin{align}
\bar R_{\sigma\nu}={}&\bar\nabla_\rho L^\rho{}_{\sigma\nu}
-\bar\nabla_\nu L^\rho{}_{\sigma\rho}
-L^\rho{}_{\lambda\rho}L^\lambda{}_{\sigma\nu}
+L^\rho{}_{\lambda\nu}L^\lambda{}_{\sigma\rho},
\label{eq:stegr-Ricci-L}\\
\bar R={}&g^{\sigma\nu}\!\left(
\bar\nabla_\rho L^\rho{}_{\sigma\nu}-\bar\nabla_\nu L^\rho{}_{\sigma\rho}
-L^\rho{}_{\lambda\rho}L^\lambda{}_{\sigma\nu}
+L^\rho{}_{\lambda\nu}L^\lambda{}_{\sigma\rho}\right).
\label{eq:stegr-scalar-L}
\end{align}

Equations~\eqref{eq:stegr-superpotential} and~\eqref{eq:stegr-disformation} give
\begin{equation}
\mathfrak P^\rho{}_{\mu\nu}
=L^\rho{}_{\mu\nu}
+\tfrac12\bigl(Q^\rho-\widetilde Q^\rho\bigr)g_{\mu\nu}
-\tfrac14\bigl(\delta^\rho_\mu Q_\nu+\delta^\rho_\nu Q_\mu\bigr).
\label{eq:stegr-PfromL}
\end{equation}
An independent calculation of the Levi-Civita Einstein tensor begins with the flat affine geometry. Using $L^\rho{}_{\mu\rho}=\tfrac12Q_\mu$ in Eq.~\eqref{eq:stegr-Ricci-L}, together with the symmetry of the Levi-Civita Ricci tensor, gives
\begin{align}
 \bar R_{\mu\nu}={}&
 \tfrac12\bar\nabla_\rho
 \left(Q_\mu{}^\rho{}_\nu+Q_\nu{}^\rho{}_\mu-Q^\rho{}_{\mu\nu}\right)
 -\tfrac12\bar\nabla_{(\mu}Q_{\nu)}
 -\tfrac14Q_\lambda
 \left(Q_\mu{}^\lambda{}_\nu+Q_\nu{}^\lambda{}_\mu-Q^\lambda{}_{\mu\nu}\right)
 \nonumber\\
 &+\tfrac18
 \left(Q_\lambda{}^\rho{}_\nu+Q_\nu{}^\rho{}_\lambda-Q^\rho{}_{\lambda\nu}\right)
 \left(Q_\mu{}^\lambda{}_\rho+Q_\rho{}^\lambda{}_\mu-Q^\lambda{}_{\mu\rho}\right)
 \nonumber\\
 &+\tfrac18
 \left(Q_\lambda{}^\rho{}_\mu+Q_\mu{}^\rho{}_\lambda-Q^\rho{}_{\lambda\mu}\right)
 \left(Q_\nu{}^\lambda{}_\rho+Q_\rho{}^\lambda{}_\nu-Q^\lambda{}_{\nu\rho}\right).
 \label{eq:stegr-Ricci-Q}
\end{align}
Every term on the right side is now a function of $Q$ and its first derivative. Contracting Eq.~\eqref{eq:stegr-Ricci-Q} gives
\begin{equation}
 \bar R=\bar\nabla_\rho(\widetilde Q^\rho-Q^\rho)+\mathbb Q.
 \label{eq:stegr-scalar-Q}
\end{equation}

Equations~\eqref{eq:stegr-Ricci-Q} and~\eqref{eq:stegr-scalar-Q} give the Einstein tensor before any comparison with the dynamics,
\begin{equation}
 \bar G_{\mu\nu}
 =\bar R_{\mu\nu}[Q]
 -\tfrac12g_{\mu\nu}
 \left[\bar\nabla_\rho(\widetilde Q^\rho-Q^\rho)+\mathbb Q\right],
 \label{eq:stegr-GQ-uncollected}
\end{equation}
Using Eq.~\eqref{eq:stegr-PfromL} and $\Gamma=\bar\Gamma-L$, the derivative terms in Eqs.~\eqref{eq:stegr-Ricci-Q}--\eqref{eq:stegr-GQ-uncollected} combine into the affine density divergence of $\mathfrak P^\rho{}_{\mu\nu}$. The quadratic products reduce to the seven structures collected in Eq.~\eqref{eq:stegr-AQ}. The curvature identity is
\begin{equation}
\bar G_{\mu\nu}
=\frac1{\sqrt{-g}}\nabla_\rho\!\left(\sqrt{-g}\,\mathfrak P^\rho{}_{\mu\nu}\right)
+\mathcal A_{\mu\nu}[Q].
\label{eq:stegr-GQ}
\end{equation}
No action identity or Lagrangian metric variation enters Eqs.~\eqref{eq:stegr-Ricci-Q}--\eqref{eq:stegr-GQ}. The right sides of Eqs.~\eqref{eq:stegr-eom-Q} and~\eqref{eq:stegr-GQ} were obtained independently and are identical term by term. 

Restoring upper indices and the normalization of $\Pi$ gives the following identity on the STEGR Legendre graph $\Pi=\Pi(Q,g)$, without using the second Hamilton equation:
\begin{equation}
\sqrt{-g}\,\bar G^{\mu\nu}
=-\frac{\kappa^2}{2}\nabla_\rho\Pi^{\rho\mu\nu}
+\frac{\kappa^4}{\sqrt{-g}}\,\mathcal V^{\mu\nu}[\Pi,g],
\label{eq:stegr-phase-einstein}
\end{equation}
or, using Eq.~\eqref{eq:app-dHdg},
\begin{equation}
\nabla_\rho\Pi^{\rho\mu\nu}+\frac{\partial(\sqrt{-g}\,\Ham_{\textrm{STEGR}})}{\partial g_{\mu\nu}}\bigg|_\Pi=-\frac{2}{\kappa^2}\,\sqrt{-g}\;\bar G^{\mu\nu}.
\label{eq:stegr-einstein}
\end{equation}
Equation~\eqref{eq:stegr-einstein} identifies the vacuum second Hamilton equation with $\bar G_{\mu\nu}=0$. Including the matter term in Eq.~\eqref{eq:stegr-eom} gives
\begin{equation}
\sqrt{-g}\,\bar G^{\mu\nu}=\frac{\kappa^2}{4}\,\sqrt{-g}\,T^{\mu\nu},\qquad
T^{\mu\nu}:=-\frac{2}{\sqrt{-g}}\frac{\partial(\sqrt{-g}\,\Lag_{\rm m})}{\partial g_{\mu\nu}},
\label{eq:stegr-sourced}
\end{equation}
Here, $T^{\mu\nu}$ is the Hilbert stress tensor. Since $\kappa^2=32\pi G/c^4$, Eq.~\eqref{eq:stegr-sourced} is $\bar G^{\mu\nu}=8\pi G T^{\mu\nu}/c^4$.

The STEGR Legendre Hessian is regular but indefinite. Appendix~\ref{app:inverse} derives its Lorentzian inertia from the irreducible trace-free and trace sectors. This signature is a property of the full $40$-dimensional fiber and is not a statement about the reduced ADM kinetic metric.

\begin{theorem}[STEGR metric-sector Hamiltonian equivalence]
\label{thm:stegr-equivalence}
Fix a flat torsionless inertial connection and impose compactly supported
metric variations, or boundary conditions that remove the metric surface
term. On this sector, stationary points of the bulk STEGR metric action are in
one-to-one correspondence with stationary points $(g,\Pi)$ of
\begin{equation}
 S_H[g,\Pi]=\int d^4x\left[
 \Pi^{\rho\mu\nu}\nabla_\rho g_{\mu\nu}
 -\sqrt{-g}\,\Ham_{\rm STEGR}(g,\Pi)\right].
 \label{eq:stegr-first-order-action}
\end{equation}
The correspondence is the invertible Legendre map
$\Pi=\partial(\sqrt{-g}\Lag)/\partial Q$, with
$Q_{\rho\mu\nu}=\nabla_\rho g_{\mu\nu}$. It does not vary the inertial
connection or classify its multiplier constraints.
\end{theorem}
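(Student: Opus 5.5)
The plan is the standard first-order/second-order equivalence for a regular fiberwise Legendre map, specialized to the STEGR metric sector. First write the variations of $S_H$ in Eq.~\eqref{eq:stegr-first-order-action}. Varying $\Pi^{\rho\mu\nu}$ (a symmetric-pair tensor density, so variations are projected by Eq.~\eqref{eq:projector}) gives $\nabla_\rho g_{\mu\nu}=\partial(\sqrt{-g}\Ham)/\partial\Pi^{\rho\mu\nu}$, which by Eq.~\eqref{eq:stegr-first-hamilton} and Theorem~\ref{thm:inverse} is equivalent to $\Pi=k\mathfrak P[Q]$ with $Q=\nabla g$. Varying $g_{\mu\nu}$ at fixed $\Gamma$ and fixed $\Pi$ (contravariant components), and integrating $\Pi^{\rho\mu\nu}\nabla_\rho\delta g_{\mu\nu}$ by parts, gives $-\nabla_\rho\Pi^{\rho\mu\nu}-\partial(\sqrt{-g}\Ham)/\partial g_{\mu\nu}|_\Pi=0$. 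Here I will use that $\Gamma$ is torsionless and flat, so $\partial_\rho(\Pi^{\rho\mu\nu}\delta g_{\mu\nu})=\nabla_\rho(\Pi^{\rho\mu\nu}\delta g_{\mu\nu})$ for a weight-one density, and the trace terms cancel as in Eq.~\eqref{eq:divPi}. The surface term vanishes by the compact-support or boundary hypothesis. This reproduces Eq.~\eqref{eq:stegr-eom} in vacuum.

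Next, I will show that the metric Euler--Lagrange equation of the bulk action is the same pair. The cleanest route is the reduced-action argument. Since $\Pi\mapsto\Pi^{\rho\mu\nu}Q_{\rho\mu\nu}-\sqrt{-g}\Ham$ is quadratic with regular Hessian (Theorem~\ref{thm:inverse}), its unique critical point in $\Pi$ at fixed $(g,Q)$ is $\Pi=k\mathfrak P[Q]$. The critical value is $\sqrt{-g}\Lag(g,Q)$ by the Legendre identity \eqref{eq:legendre}. Hence $S_H[g,\Pi(g,\nabla g)]=S_{\rm STEGR}^{\rm bulk}[g]$. By the chain rule, $\delta S_H=\int(\delta S_H/\delta g)\delta g+\int(\delta S_H/\delta\Pi)\delta\Pi$, and the second term vanishes on the graph. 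So the metric EL expression equals the $g$-variation of $S_H$ restricted to the graph. Therefore the map $g\mapsto(g,\Pi(g,\nabla g))$ sends EL solutions to stationary points of $S_H$. Conversely, any stationary point of $S_H$ satisfies the $\Pi$-equation, hence lies on the graph, and its $g$-equation is then the EL equation. Both directions use only the invertibility of the Legendre map, and injectivity of the correspondence is immediate because $\Pi$ is determined by $g$. As a cross-check, which is not needed logically, Eq.~\eqref{eq:stegr-einstein} identifies both equations with $\bar G^{\mu\nu}=0$.

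The step I expect to be the main obstacle is making the pointwise Legendre algebra honest on the constrained fiber. $\Pi$ and $Q$ live in $T^*M\otimes\Sym^2$ and its dual, and the derivatives $\partial/\partial\Pi$ and $\partial/\partial g|_\Pi$ must be taken with the symmetric-pair projector. The chain-rule identities \eqref{eq:legendre-id1}--\eqref{eq:legendre-id2} must also be checked to hold with $\sqrt{-g}$ and index lowering depending on $g$. I would handle this by treating $\Pi$ as an independent contravariant density, so that $\Pi^{\rho\mu\nu}Q_{\rho\mu\nu}$ carries no hidden metric dependence, and by invoking Eq.~\eqref{eq:legendre-id2} in the form $\partial(\sqrt{-g}\Ham)/\partial g|_\Pi=-\partial(\sqrt{-g}\Lag)/\partial g|_Q$. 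Finally, I note that $\Gamma$ is held fixed throughout, so no connection equation or multiplier enters, in line with the last sentence of the theorem.
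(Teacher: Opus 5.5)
Your argument is correct and is essentially the paper's proof. The $\Pi$-variation together with Theorem~\ref{thm:inverse} places every stationary point on the Legendre graph, the $g$-variation after integration by parts is Eq.~\eqref{eq:stegr-eom}, your reduced-action chain rule repackages the envelope identity $\delta_g h|_\Pi=-\delta_g\ell|_Q$ that the paper uses, and the converse is the unique lift $\Pi=\partial\ell/\partial Q$. One small correction: the integration by parts needs only torsionlessness, not flatness, for $\nabla_\rho$ of the weight-one density $\Pi^{\rho\mu\nu}\delta g_{\mu\nu}$ to reduce to a coordinate divergence.
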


\begin{proof}
Write $\ell(g,Q)=\sqrt{-g}\Lag_{\rm STEGR}$ and
$h(g,\Pi)=\sqrt{-g}\Ham_{\rm STEGR}=\Pi\mathbin{\cdot}Q-\ell$.
Theorem~\ref{thm:inverse} makes the map $Q\leftrightarrow\Pi$ invertible.
Variation of Eq.~\eqref{eq:stegr-first-order-action} with respect to $\Pi$
therefore gives
\begin{equation}
 \nabla_\rho g_{\mu\nu}
 =\frac{\partial h}{\partial\Pi^{\rho\mu\nu}}
 =Q_{\rho\mu\nu}.
\end{equation}
At fixed $\Pi$, the Legendre-envelope identity is
\begin{equation}
 \left.\delta_g h\right|_\Pi
 =-\left.\delta_g\ell\right|_Q,
 \label{eq:stegr-envelope-core}
\end{equation}
because the terms proportional to $\delta_gQ$ cancel against
$\Pi=\partial\ell/\partial Q$. Varying $g$ in the first-order action and
integrating by parts gives Eq.~\eqref{eq:stegr-eom}. Substitution of the first
Hamilton equation and Eq.~\eqref{eq:stegr-envelope-core} returns the bulk
metric Euler--Lagrange equation. Conversely, every metric solution lifts
uniquely by $\Pi=\partial\ell/\partial Q$ and satisfies both Hamilton
equations. Thus the two solution sets are in bijection in the declared
fixed-connection sector.
\end{proof}

\subsection{MTEGR field equations}
\label{ssec:mtegr}

MTEGR follows the same construction with the frame field as potential and local torsion $T^a{}_{\mu\nu}=2D_{[\nu}e^a{}_{\mu]}$ as generalized velocity, where $D_\nu e^a{}_\mu=\partial_\nu e^a{}_\mu+\omega_\nu{}^a{}_b\,e^b{}_\mu$ and $\omega$ is the flat spin connection, kept arbitrary throughout.
Up to boundary terms, the MTEGR action is known to factorize into the torsion tensor 
\begin{equation}
S_{\textrm{MTEGR}}\approx-\tfrac{1}{\kappa^2}\!\int|e|\,T^\rho{}_{\mu\nu}S_\rho{}^{\mu\nu},
\end{equation}
with the torsion superpotential $S_\rho{}^{\mu\nu}$, 
\begin{equation}
S_\rho{}^{\mu\nu}=-K^{\mu\nu}{}_\rho+T^\mu\delta^\nu_\rho-T^\nu\delta^\mu_\rho 
\,.
\end{equation}
The momentum field strength is proportional to the torsion superpotential:
\begin{equation}
P_a{}^{\mu\nu}=\frac{\partial|e|\Lag}{\partial T^a{}_{\mu\nu}}=-\frac{2|e|}{\kappa^2}\,e_a{}^\rho S_\rho{}^{\mu\nu} =-\frac{2|e|}{\kappa^2}\left(-K^{\mu\nu}{}_a+T^\mu e^\nu_a-T^\nu e^\mu_a \right),
\label{eq:mtegr-momentum}
\end{equation}
in the contortion convention of Sec.~\ref{ssec:conventions}, with $S_\rho{}^{\mu\rho}=2T^\mu$, $T^\rho{}_{\mu\nu}S_\rho{}^{\mu\nu}=2\,\mathbb T$, and traces $P_a=0$, $\widetilde P^\mu=e^a{}_\nu P_a{}^{\mu\nu}=-\tfrac{4|e|}{\kappa^2}T^\mu$. Thus $|e|\Lag=\tfrac12P_a{}^{\mu\nu}T^a{}_{\mu\nu}$ and $|e|\Ham_{\rm MTEGR}=T^a{}_{\mu\nu}P_a{}^{\mu\nu}-|e|\Lag=|e|\Lag$.

With the coordinate momentum $P^{\rho\mu\nu}:=\eta^{ab}e_b{}^\rho P_a{}^{\mu\nu}$, the map inverts in closed form:
\begin{equation}
T_{\rho\mu\nu}=\frac{\kappa^2}{2|e|}\Bigl( P_{\mu\rho\nu}- P_{\nu\rho\mu}
+\tfrac12\,g_{\rho\mu}\widetilde P_\nu-\tfrac12\,g_{\rho\nu}\widetilde P_\mu\Bigr),
\label{eq:mtegr-inverse}
\end{equation}
Here the frame-contracted slot of $P$ occupies the middle position.
Contracting this formula gives
\begin{equation}
 T^\mu[P]= -\frac{\kappa^2}{4|e|}\,\widetilde P^\mu.
 \label{eq:mtegr-inverse-trace}
\end{equation}
Substitution of Eqs.~\eqref{eq:mtegr-inverse} and
\eqref{eq:mtegr-inverse-trace} into the contortion and superpotential gives
\begin{equation}
 S^{\rho\mu\nu}[T(P)]= -\frac{\kappa^2}{2|e|}P^{\rho\mu\nu}.
 \label{eq:mtegr-inverse-check}
\end{equation}
Equation~\eqref{eq:mtegr-inverse-check} is Eq.~\eqref{eq:mtegr-momentum}
with a raised frame-contracted index. The proposed inverse therefore returns
$P$ under the momentum map. Since both maps act between $24$-dimensional
antisymmetric torsion fibers, this composition proves that the Legendre map is
bijective and that Eq.~\eqref{eq:mtegr-inverse} is its inverse.
The purely kinetic Hamiltonian density is the cyclic pairing, the exact
analogue of Eq.~\eqref{eq:stegr-ham},
\begin{equation}
|e|\Ham_{\textrm{MTEGR}}=-\frac{\kappa^2}{2|e|}\Bigl(P^{\rho\mu\nu} P_{\mu\nu\rho}+\tfrac12\,\widetilde P^\mu\widetilde P_\mu\Bigr).
\label{eq:mtegr-ham}
\end{equation}

The Hamilton equations of Eq.~\eqref{eq:mtegr-ham} again exhibit the Einstein tensor. The frame gradient at fixed $P$ needs no compensating $\omega$ term, since $|e|\Ham$ is an ultralocal Lorentz scalar of $(e,P)$, and the covariant divergence carries only the frame index, $D_\nu P_a{}^{\mu\nu}=\partial_\nu P_a{}^{\mu\nu}-\omega_\nu{}^b{}_a P_b{}^{\mu\nu}$, no coordinate-connection term surviving by the antisymmetry of the pair and the same weight cancellation as Eq.~\eqref{eq:divPi}. The frame derivative $\partial_\nu e^a{}_\mu$ enters the torsion in two antisymmetric slots. For every flat spin connection,
\begin{equation}
2D_\nu P_a{}^{\mu\nu}
+\frac{\partial(|e|\,\Ham_{\textrm{MTEGR}})}{\partial e^a{}_\mu}\bigg|_{P}
=-\frac{4}{\kappa^2}\,|e|\;e_a{}^\rho\,G_\rho{}^\mu ,
\label{eq:mtegr-einstein}
\end{equation}
with $G_{\mu\nu}$ the Levi-Civita Einstein tensor of $g_{\mu\nu}=\eta_{ab}e^a{}_\mu e^b{}_\nu$. The factor $4$ is two independent factors of two, the antisymmetric pair in $\pi=2P$ and the chain rule $\partial g_{\sigma\rho}/\partial e^a{}_\mu=\delta^\mu_\sigma e_{a\rho}+\delta^\mu_\rho e_{a\sigma}$. 

The derivation mirrors the STEGR one closely. The Legendre identity in Eq.~\eqref{eq:legendre-id2} converts the fixed-$P$ gradient into the frame Euler--Lagrange expression of $|e|\Lag$. The action display in Eq.~\eqref{eq:mtegr-action} exhibits $|e|\,\mathbb T$ as $|e|\bar R$ minus the total divergence $2\,\partial_\mu(|e|T^\mu)$, whose Euler--Lagrange expression vanishes. The frame variation of $|e|\bar R$ doubles the Palatini one through the chain rule above. In expanded form, with $S_a{}^{\lambda\nu}:=e_a{}^\rho S_\rho{}^{\lambda\nu}$,
\begin{align}
|e|\,e_a{}^\mu G_\mu{}^{\lambda} &= D_\nu\!\left(|e|\,S_a{}^{\lambda\nu}\right)
+|e|\,e_a{}^{\mu}\left[-\tfrac12\,\mathbb T\,\delta^\lambda_\mu +T^\lambda T_\mu-T^{\lambda}{}_{\mu\alpha}T^\alpha \right. \\
& \qquad\qquad \left.-\tfrac12\left(T_\alpha{}^{\lambda}{}_{\beta}T^{\alpha}{}_{\mu}{}^{\beta}
+T_\alpha{}^{\lambda}{}_{\beta}T^{\beta}{}_{\mu}{}^{\alpha}
+T^{\lambda}{}_{\alpha\beta}T^{\alpha}{}_{\mu}{}^{\beta}\right)\right], \nonumber
\label{eq:mtegr-Tform}
\end{align}
The antisymmetric part of the right side cancels, as required by local Lorentz invariance, and the action argument proves the identity on all sixteen mixed components. The frame-volume variation $\partial(1/|e|)/\partial e^a{}_\mu=-e_a{}^\mu/|e|$ contributes $+\tfrac{\kappa^2}{4}\delta^\lambda_\mu|e|\Ham$ to the field equation. For matter coupled through the metric, the resulting source agrees with the STEGR source in Eq.~\eqref{eq:stegr-sourced}. The superpotential form agrees with the MTEGR equations of Ref.~\cite{Capozziello:2022zzh}, with the Einstein tensor written explicitly.

\subsection{On Legendre regularity and primary constraints}
\label{ssec:legendre}

Legendre regularity has a limited consequence. Let $W_{\mu A|\nu B}=\partial^2L/\partial F_\mu{}^A\partial F_\nu{}^B$ be the field-strength Hessian on the actual fiber $V$. The condition
\begin{equation}
\ker W\big|_V=0, \label{eq:limited-claim}
\end{equation}
excludes primary constraints arising solely from degeneracy of the field-strength Legendre map. It does not remove constraints associated with gauge symmetry, teleparallel restrictions, multipliers, or boundary conditions.

For STEGR, $V=T^*M\otimes\Sym^2T^*M$, and Theorem~\ref{thm:inverse} gives $\ker W|_V=0$.
First-class constraints from spacetime diffeomorphisms, internal gauge transformations, the teleparallel restrictions $T=0$, $R=0$ and their multipliers, and boundary conditions all survive. They are detected by degeneracy directions of the presymplectic form, $\iota_{\delta_\epsilon}\Omega_\Sigma\approx0$, or by the moment maps $G_\Sigma[\epsilon]$ of Sec.~\ref{sec:hypersurface}, not by the pointwise rank of $W$.
As in Maxwell theory, the field-strength Legendre map is regular on the antisymmetric two-form fiber for MTEGR, and it is also regular for STEGR. The covariant Hamilton equations therefore reproduce the field equations without introducing Legendre constraints on these field-strength fibers or applying the Dirac stabilization algorithm to derive those field equations. 

Yang--Mills theory makes the distinction directly. Its curvature Legendre map
is invertible on the antisymmetric fiber,
$\Pi_a{}^{\mu\nu}=\tfrac12\sqrt{|g|}\kappa_{ab}F^{b\mu\nu}$, with quadratic
Hamiltonian
$\Ham_{\rm YM}=\kappa^{ab}\Pi_a{}^{\mu\nu}\Pi_{b\mu\nu}/\sqrt{|g|}$. On a
$t=\mathrm{const}$ surface, however, $A^a_0$ remains the multiplier of the
Gauss constraint $\mathcal G_a=D_iE_a{}^i\approx0$. Thus regularity of the
field-strength map does not remove first-class gauge constraints. The STEGR
claim has the same limited meaning.

For STEGR itself, work in the compact, boundaryless, boundary-subtracted, trivial-holonomy coincident-gauge sector with ADM variables $(N,N^i,\gamma_{ij};\pi_N,\pi_i,\pi^{ij})$ \cite{ArnowittDeserMisner1962,DeWitt1967}. The primary constraints are $\pi_N\approx0$, $\pi_i\approx0$. Stabilization gives
\begin{equation}
\mathcal H_\perp=\frac{\kappa^2}{2\sqrt\gamma}\Bigl(\pi_{ij}\pi^{ij}-\tfrac12\pi^2\Bigr)-\frac{2\sqrt\gamma}{\kappa^2}\,{}^{(3)}\!R\;\approx\;0,
\qquad
\mathcal H_i=-2D_j\pi_i{}^j\;\approx\;0.
\label{eq:adm-constraints}
\end{equation}
In this compact reduced sector, all eight constraints are first class. The ADM Legendre transform has coefficient $A=2/\kappa^2$. The kinetic and curvature coefficients are $1/A=\kappa^2/2$ and $A=2/\kappa^2$, respectively. An overall sign inherited from the action convention multiplies the complete constraint and does not change its zero set or algebra. The degree count is
\begin{equation}
N_{\rm dof}=\tfrac12\bigl(N_{\rm can}-2N_{\rm FC}-N_{\rm SC}\bigr)=\tfrac12(20-2\cdot8-0)=2,
\label{eq:dof}
\end{equation}
the GR value \cite{HenneauxTeitelboim1992}. The Hamiltonian constraint remains part of the first-class set. The field-strength reformulation relocates the Legendre bookkeeping. It does not erase the gauge constraint structure. Appendix~\ref{app:dirac} gives the complete compact reduced classification. Keeping the inertial connection, or imposing flatness and vanishing torsion by multipliers, instead requires a separate stabilized constraint matrix that includes the connection, multiplier, reducibility, boundary, and holonomy sectors.

In MTEGR without the time gauge, Maluf and da Rocha-Neto obtain a first-class set containing the Hamiltonian and vector constraints and find two local degrees of freedom \cite{Maluf:2000zt}. Guzm\'an derives the coincident-gauge STEGR $3+1$ Hamiltonian and its lapse--shift primary structure, but does not compute the complete constraint algebra or degree count \cite{Guzman:2023}. With an independent connection, the additional connection variables are constrained rather than locally propagating \cite{CapozzielloSauro}. None of these results supplies the unreduced rank calculation excluded from the present analysis.

The covariant fiber calculation and the reduced canonical calculation answer different questions. The first proves the explicit inverse $Q_{\rho\mu\nu}\leftrightarrow\Pi^{\rho\mu\nu}$ on the $40$-component nonmetricity fiber at fixed flat torsionless $\Gamma$, together with the Einstein-form Hamilton equation. It is an algebraic metric-sector result, not a Dirac count. The second uses ADM equivalence and the Dirac--Teitelboim brackets in Appendix~\ref{app:dirac} to establish eight first-class constraints and two local degrees of freedom in the compact, boundaryless, boundary-subtracted, trivial-holonomy coincident-gauge sector. The regular field-strength Legendre map removes only primary constraints caused by degeneracy of that map. It does not remove the gauge constraints.

The field-strength Legendre transform behaves like the mechanical one.
Splitting $\Lag=K-U$ into a part homogeneous quadratic in the velocities and
a non-derivative part gives
\begin{equation}
\sqrt{-g}\,\Ham=K+U,\qquad \sqrt{-g}\,\Lag=K-U,
\label{eq:legendre-flip}
\end{equation}
so $\sqrt{-g}\,\Ham=\sqrt{-g}\,\Lag$ for a purely kinetic theory. Matter
potentials, mass terms, sources, and a cosmological constant enter $\Ham$ with
the opposite sign to $\Lag$.

The covariant Hamiltonian density is not the canonical energy density. The
field-strength Legendre transform treats all spacetime derivatives on equal
footing, so $\Ham_\FS$ is a local Lorentz scalar that generates the covariant
Hamilton equations. It is not the Noether charge associated with a chosen
time translation. In source-free Maxwell theory, for example,
$\Ham_{\rm EM}=(E^2-B^2)/2$ vanishes for a null electromagnetic wave, although
$T^{00}=(E^2+B^2)/2$ is positive. Conversely, $\Ham_{\rm EM}$ is positive for
a static electric field even though the field does not evolve. The physical
energy is obtained from the stress tensor or, after a hypersurface split,
from the corresponding time-translation generator.

In the teleparallel metric equations, $\Ham_\FS$ has a different role. The Einstein-form identities in Eqs.~\eqref{eq:stegr-einstein} and~\eqref{eq:mtegr-einstein} split the Einstein tensor into the momentum-divergence term and the fixed-momentum metric derivative of the Hamiltonian density. Thus $\Ham_\FS$ enters the covariant metric field equation without being an energy density.

% =====================================================================

\section{From covariant Hamilton equations to hypersurface evolution}
\label{sec:hypersurface}
\label{ssec:hyperform}

The previous sections gave covariant field equations without choosing a time
coordinate. A hypersurface description probes how the data changes when a spatial slice is
displaced. The field-strength Hamiltonian alone does not generate this
displacement. Once a family of embeddings is chosen, the full generator
contains the normal and tangential constraints and, when present, a boundary
term. This section constructs that generator, verifies the standard ADM
deformation algebra in the declared reduced sector, and asks when swept
four-volume can parametrize a chosen flow.

We work on the metric/ADM covariant phase space after reducing the inertial
connection, flatness, torsion, and multiplier sectors. Here ``red'' denotes
the canonical functional Poisson bracket on $(h_{ij},\pi^{ij})$. The remaining
ADM gauge orbits are not quotiented, so it is neither the
Poisson--Gerstenhaber nor the fully physical bracket. No unreduced closure is
claimed.
The presymplectic construction follows the covariant phase-space framework
\cite{Crnkovic1988CPS,Lee:1990nz,Iyer:1994ys}.

Let $X:\bar\Sigma\hookrightarrow M$ be a spacelike embedding with unit normal
$n^\mu$, induced metric $h_{ij}$, and deformation vector
\begin{equation}
 \xi^\mu= Nn^\mu+N^ie_i{}^\mu,
 \qquad N=\eps_n n_\mu\xi^\mu,
 \qquad N^i=e^i{}_\mu\xi^\mu .
 \label{eq:lapse-shift}
\end{equation}
For the weight-one polymomentum, the proper and coordinate surface momenta are
\begin{equation}
 \bar\Pi^\mu{}_A=\Pi^\mu{}_A/\sqrt{|g|},
 \qquad P_A=\eps_n n_\mu\bar\Pi^\mu{}_A,
 \qquad \pi_{\Sigma A}=\sqrt h\,P_A.
 \label{eq:surface-orientation}
\end{equation}
The standard embedding kinematics needed in Appendix~\ref{app:generator} are
\begin{equation}
 \dot h_{ij}=2N\bar K_{ij}+D_iN_j+D_jN_i,
 \qquad
 \dot n_\mu=(-\eps_nD_iN+\bar K_{ij}N^j)e^i{}_\mu .
 \label{eq:kinematics}
\end{equation}

\subsection{Reduced generator and deformation algebra}
\label{ssec:generator}
The reduced Noether current and pulled-back presymplectic form are
\begin{align}
 J^\mu[\xi]&=\Pi^\mu{}_A\mathcal L_\xi\phi^A
 -\xi^\mu\sqrt{-g}\,\Lag, \label{eq:noether-current}\\
 \Omega_\Sigma(\delta_1,\delta_2)
 &=\int d^3\sigma\,
 \delta\pi_{\Sigma A}\wedge\delta\phi^A .
 \label{eq:presymplectic}
\end{align}
Here $\delta_1$ and $\delta_2$ are tangent to the covariant solution space.
For $X_\xi\phi^A=\mathcal L_\xi\phi^A$, direct variation gives
\begin{equation}
 \delta J^\mu[\xi]
 =\omega^\mu(\delta,X_\xi)+\partial_\nu k^{\mu\nu}_\xi(\delta)
 -\xi^\mu E_A\delta\phi^A,
 \quad
 k^{\mu\nu}_\xi=2\xi^{[\nu}\Pi^{\mu]}{}_A\delta\phi^A .
 \label{eq:master-identity}
\end{equation}
Thus, on shell,
\begin{equation}
 \delta G^J_\Sigma[\xi]
 =\Omega_\Sigma(\delta,X_\xi)
 +\mathcal F_{\partial\Sigma,\xi}(\delta).
 \label{eq:integrated-identity}
\end{equation}
The corner flux must vanish or be integrable before $G_\Sigma[\xi]$ is a
Hamiltonian function \cite{Regge:1974zd,Iyer:1994ys,WaldZoupas:1999wa}.
Appendix~\ref{app:generator} gives the complete variation, embedding
corrections, and boundary representative.

\phantomsection\label{ssec:decomposition}
The lapse--shift decomposition is
\begin{equation}
 G_\Sigma[N,N^i]=\int_\Sigma d\Sigma
 (N\mathcal C_\perp+N^i\mathcal C_i)+B_{\partial\Sigma},
 \label{eq:NC-split}
\end{equation}
with raw densities
\begin{equation}
 \mathcal C^{\rm raw}_\perp
 =\Ham+P_A(\mathcal L_n\phi^A-F_\perp{}^A)
 -\bar\Pi^i{}_AF_i{}^A,
 \qquad
 \mathcal C^{\rm raw}_i=P_A\mathcal L_{e_i}\phi^A .
 \label{eq:raw-densities}
\end{equation}
The field-strength Hamiltonian is one term in the normal constraint density.
It is not by itself the hypersurface-deformation generator.

\phantomsection\label{ssec:algebra}
At fixed lapse and shift, the metric dependence of the normal changes the
closure bracket to the Bergmann--Komar bracket
\cite{Bergmann:1972ww,Teitelboim1973,Hojman:1976vp,Blohmann:2010jd,Bojowald:2016hgh}
\begin{equation}
 [\xi,\eta]_{\BK}=[\xi,\eta]_{\rm Lie}-X_\xi\eta+X_\eta\xi .
 \label{eq:BK}
\end{equation}
Its lapse--shift form is
\begin{align}
 N_{12}&=N^i\partial_iM-M^i\partial_iN,\label{eq:DT-lapse}\\
 N^i_{12}&=N^j\partial_jM^i-M^j\partial_jN^i
 -\eps_nh^{ij}(N\partial_jM-M\partial_jN).
 \label{eq:DT-shift}
\end{align}
For an integrable corner flux, the reduced covariant bracket has the
conditional form
\begin{equation}
 [[G_\Sigma[\xi],G_\Sigma[\eta]]]_{\rm cov}
 =G_\Sigma[[\xi,\eta]_{\BK}]
 +\mathcal C_\Sigma[\xi,\eta]+K^{\rm int}_{\partial\Sigma}[\xi,\eta].
 \label{eq:conditional-closure}
\end{equation}
In the compact, boundaryless, boundary-subtracted, trivial-holonomy
coincident-gauge sector, the reduced generators are the ADM generators and
close without extension,
\begin{equation}
 \{G^{\rm STEGR}_\Sigma[N,N^i],G^{\rm STEGR}_\Sigma[M,M^i]\}_{\rm red}
 =G^{\rm STEGR}_\Sigma[N_{12},N^i_{12}]_{\rm red}.
 \label{eq:reduced-closure}
\end{equation}
The corresponding eight first-class constraints and two-degree-of-freedom
count are proved in Appendix~\ref{app:dirac}. The unreduced connection,
multiplier, reducibility, boundary, and holonomy sectors require a separate
stabilized constraint matrix.

\phantomsection\label{ssec:charges}
For nonempty boundaries, the charge and its integrability depend on the chosen
boundary phase space \cite{Regge:1974zd,Brown:1992br,WaldZoupas:1999wa}. Those
details are not needed for the compact reduced classification and are left to
Appendix~\ref{app:generator}.

\subsection{Evolution along a family of embeddings}
\label{ssec:classical-evolution}

The word ``evolution'' has two related meanings in this formulation. The
field-strength Hamiltonian first gives a covariant system of spacetime partial
differential equations. For STEGR, its first Hamilton equation returns every
component of $Q_{\rho\mu\nu}=\nabla_\rho g_{\mu\nu}$ from $(g,\Pi)$, while its
second equation fixes $\nabla_\rho\Pi^{\rho\mu\nu}$. These equations contain no
preferred time coordinate. In this sense, $\Ham_\FS$ generates spacetime
development through the variational equations \eqref{eq:fs-hamilton}. It is
not an ordinary canonical Hamiltonian associated with one foliation.

To obtain a one-parameter evolution, choose a smooth family of spacelike
embeddings $X_s:\bar\Sigma\hookrightarrow M$. Let its deformation vector be
$\xi^\mu=Nn^\mu+N^ie_i{}^\mu$ as in Eq.~\eqref{eq:lapse-shift}. For a
differential form $F$, differentiation of its pullback obeys the kinematical
identity
\begin{equation}
 \frac{d}{ds}X_s^*F=X_s^*(\mathcal L_\xi F).
 \label{eq:pullback-evolution}
\end{equation}
The corresponding formula for hypersurface tensors includes the standard
normal and tangential index projections.
The lapse and shift therefore select a linear combination of the spacetime
changes determined by the covariant field equations. They do not alter those
equations.

On the reduced covariant phase space, the same deformation is Hamiltonian
only when its corner flux vanishes or is integrable. Equation
\eqref{eq:integrated-identity} then defines the corrected generator
$G_\Sigma[N,N^i]$. For a reduced hypersurface functional $\mathcal O_\Sigma$
with no separate explicit dependence on $s$,
\begin{equation}
 \frac{d\mathcal O_\Sigma}{ds}
 =X_\xi[\mathcal O_\Sigma]
 =\{\mathcal O_\Sigma,G_\Sigma[N,N^i]\}_{\rm red}.
 \label{eq:surface-hamilton-evolution}
\end{equation}
An explicit dependence on the embedding adds the corresponding partial
derivative. Equations \eqref{eq:pullback-evolution} and
\eqref{eq:surface-hamilton-evolution} give the link between covariant
Hamilton equations and hypersurface Hamiltonian evolution. The generator is
the full lapse--shift expression \eqref{eq:NC-split}, including its constraint
and boundary terms. The scalar $\Ham_\FS$ is only one term in its raw normal
density.

An arbitrary lapse $N(x)$ assigns an independent infinitesimal normal
displacement to each point of the slice. This is the classical content of
many-fingered evolution \cite{Teitelboim1973,Hojman:1976vp,Isham:1984sb}:
\begin{equation}
 \delta_{(N,N^i)}\mathcal O_\Sigma
 =\{\mathcal O_\Sigma,G_\Sigma[N,N^i]\}_{\rm red}.
 \label{eq:many-fingered-classical}
\end{equation}
Evolution along one prescribed family $X_s$ requires only the generator along
that family. Consistent composition of arbitrary local deformations requires
the closure relations \eqref{eq:DT-lapse}--\eqref{eq:reduced-closure}. Their
commutator is another normal--tangential deformation. After quotienting by
the resulting diffeomorphisms, this gives classical refoliation equivalence in
the declared compact reduced sector. The paper does not extend that statement
to the unreduced connection, multiplier, boundary, or nontrivial-holonomy
sectors.

For a closed universe, the reduced generator is a sum of first-class
constraints and has no non-gauge bulk part. Equations
\eqref{eq:pullback-evolution} and \eqref{eq:surface-hamilton-evolution} still
reconstruct the fields on neighboring embeddings. They relate different
representatives of the same diffeomorphism class. A physical time evolution
then requires a clock choice, boundary time, or relational observable. The
regularity of the field-strength Legendre map does not change this canonical
fact \cite{Kuchar:1991qf,Isham:1992ms}.

\subsection{Swept four-volume and the lapse zero mode}
\label{ssec:volume-time}

Swept four-volume is considered as a scalar path parameter for the Hamiltonian
flow generated by $G_\Sigma[N,N^i]$, not as a new Hamiltonian.

The lapse field contains one spatially averaged mode and infinitely many
shape modes. On a compact slice of finite volume
\begin{equation}
 V_\Sigma=\int_\Sigma d\Sigma,
 \qquad
 N_0=\frac{1}{V_\Sigma}\int_\Sigma N\,d\Sigma,
 \qquad
 N_\perp=N-N_0,
 \label{eq:lapse-zero-mode}
\end{equation}
so $\int_\Sigma N_\perp d\Sigma=0$. The decomposition depends on the induced
measure of the slice. A noncompact slice requires a finite averaging region or
a specified smearing function.

Let $\epsilon$ be the oriented spacetime volume form. Its contraction with the
deformation vector obeys $X_s^*(\iota_\xi\epsilon)=N\,d\Sigma$, since the shift
is tangent to the slice. The oriented four-volume swept out by the family of
slices therefore satisfies
\begin{equation}
 \frac{d\tau}{ds}
 =\int_{\Sigma_s}N(x)\,d\Sigma
 =V_{\Sigma_s}N_0(s).
 \label{eq:swept-volume}
\end{equation}
For future-directed slices with positive lapse, $\tau$ is the ordinary swept
four-volume. With a signed lapse, it is an oriented volume parameter.

Combining Eqs.~\eqref{eq:surface-hamilton-evolution} and
\eqref{eq:swept-volume}, along a chosen path with $V_{\Sigma_s}N_0(s)\neq0$,
gives
\begin{equation}
 \frac{\dd\mathcal O_\Sigma}{\dd\tau}
 =\frac{1}{V_{\Sigma_s}N_0(s)}
 \{\mathcal O_\Sigma,G_\Sigma[N,N^i]\}_{\rm red}.
 \label{eq:volume-reparametrized-hamilton-flow}
\end{equation}
The numerator retains the full lapse, the shift, and any boundary term.
Equation~\eqref{eq:volume-reparametrized-hamilton-flow} therefore
reparametrizes one chosen flow but is not a Hamiltonian law determined by
$\tau$ alone. The modes $N_\perp(x)$ change the local embedding without
changing $\tau$ to first order. An autonomous volume-time law requires a
reduction that removes these shape modes, such as the homogeneous sector
below.

\begin{example}[Diagonal Bianchi I]
\label{ex:bianchi-volume-time}
The homogeneous reduction can be evaluated explicitly. Take a compact
boundaryless coordinate cell of volume $V_0$, coincident gauge, trivial
holonomy, vanishing shift, and
\begin{equation}
 \dd\ell^2=-N(u)^2\dd u^2
 +\sum_{i=1}^3e^{2\beta_i(u)}(\dd x^i)^2,
 \qquad
 V=V_0e^{\beta_1+\beta_2+\beta_3}.
 \label{eq:bianchi-metric-main}
\end{equation}
The nonzero nonmetricity components are
$Q_{u00}=-2N\dot N$ and
$Q_{uii}=2e^{2\beta_i}\dot\beta_i$, with no sum on $i$. Substitution into
the quadratic nonmetricity scalar in Eq.~\eqref{eq:stegr-action} gives
\begin{equation}
 \mathbb Q_{\rm BI}
 =\frac{1}{N^2}\left[
 \sum_i\dot\beta_i^2-\left(\sum_i\dot\beta_i\right)^2
 \right]
 =\frac{1}{N^2}\dot{\boldsymbol\beta}^{\mathsf T}
 A\dot{\boldsymbol\beta},
 \qquad
 A=I-\boldsymbol 1\boldsymbol 1^{\mathsf T}.
 \label{eq:bianchi-Q-main}
\end{equation}
All $\dot N$ terms cancel. The boundary-subtracted STEGR bulk action therefore
reduces to
\begin{equation}
 L_{\rm BI}
 =-\frac{2V}{\kappa^2N}
 \dot{\boldsymbol\beta}^{\mathsf T}A\dot{\boldsymbol\beta},
 \qquad
 \boldsymbol P
 =-\frac{4V}{\kappa^2N}A\dot{\boldsymbol\beta}.
 \label{eq:bianchi-LP-main}
\end{equation}
Since $A^{-1}=I-\tfrac12\boldsymbol 1\boldsymbol 1^{\mathsf T}$, the Legendre
transform is
\begin{equation}
 H_{\rm BI}^{(u)}
 =\boldsymbol P^{\mathsf T}\dot{\boldsymbol\beta}-L_{\rm BI}
 =N\mathcal C_{\rm BI},
 \qquad
 \mathcal C_{\rm BI}
 =-\frac{\kappa^2}{8V}
 \boldsymbol P^{\mathsf T}A^{-1}\boldsymbol P\approx0.
 \label{eq:bianchi-H-main}
\end{equation}
For a phase-space function $F(\boldsymbol\beta,\boldsymbol P)$ with no explicit
$u$ dependence, Hamilton's equation and Eq.~\eqref{eq:swept-volume} give
\begin{equation}
 \frac{\dd F}{\dd u}=N\{F,\mathcal C_{\rm BI}\}_{\rm red},
 \qquad
 \frac{\dd\tau}{\dd u}=NV,
 \qquad
 \frac{\dd F}{\dd\tau}
 =\frac{1}{V}\{F,\mathcal C_{\rm BI}\}_{\rm red}.
 \label{eq:bianchi-volume-flow-main}
\end{equation}
Swept four-volume is therefore an exact reparametrization along this
homogeneous constraint orbit. The calculation does not reconstruct the lapse
shape modes excluded by the truncation.
\end{example}

\section{Conclusions}
\label{sec:conclusions}

Treating torsion and nonmetricity as generalized velocity fields gives
gauge-covariant Hamilton formulations of MTEGR and STEGR. Their momentum
fields are proportional to the corresponding teleparallel superpotentials. In
both theories, the second field-strength Hamilton equation reproduces the
Einstein equation in torsion or nonmetricity variables.
The STEGR inverse supplies the explicit Hamiltonian on all $40$ nonmetricity
components without a coincident-gauge choice. For fixed flat torsionless
connection and the stated boundary conditions,
Theorem~\ref{thm:stegr-equivalence} proves equivalence with the bulk metric
action.

The reduced covariant phase space supplies the hypersurface generator,
corner flux, and conditional Bergmann--Komar algebra.  In the compact,
boundaryless, boundary-subtracted, trivial-holonomy coincident-gauge sector, this reduces to the
standard first-class ADM system. This establishes that the covariant
field-strength equations are compatible with the usual refoliation structure
after reduction to hypersurface data. The nonzero field-strength Hamiltonian is a
term in the raw normal density. It does not by itself generate non-gauge bulk
evolution on a closed universe.
The diagonal Bianchi I example realizes swept volume as an exact
reparametrization of the $H_{\rm BI}=N\mathcal C_{\rm BI}$ flow, with lapse
shape modes excluded.

\begin{appendices}
\numberwithin{equation}{section}

%======================================================================
\numberwithin{table}{section}
\section{Conventions and symbol summary}
\label{app:conventions}

This appendix fixes the notation used for the metric-affine and hypersurface
variables. It also records the derivative operators used in the
field-strength Hamilton equations.

The global metric-affine potentials and field strengths are
\begin{equation}
(g_{\mu\nu},\Gamma^\rho{}_{\mu\nu})
\quad\longleftrightarrow\quad
(Q_{\rho\mu\nu},T^\rho{}_{\mu\nu},R^\rho{}_{\sigma\mu\nu}).
\end{equation}
The local potentials and field strengths are
\begin{equation}
(g_{ab},e^a{}_{\mu},\omega_\mu{}^a{}_b)
\quad\longleftrightarrow\quad
(\Xi_{\mu ab},\Theta_{\mu\nu}{}^a,\Omega_{\mu\nu}{}^a{}_b).
\end{equation}
Their component definitions are
\begin{align}
\Xi_{\mu ab}:=D_\mu g_{ab}
&=\partial_\mu g_{ab}
-\omega_\mu{}^c{}_a g_{cb}
-\omega_\mu{}^c{}_b g_{ac},
\nonumber\\
\Theta_{\mu\nu}{}^a:=2D_{[\mu}e^a{}_{\nu]}
&=2\partial_{[\mu}e^a{}_{\nu]}
+2\omega_{[\mu}{}^a{}_b e^b{}_{\nu]},
\nonumber\\
\Omega_{\mu\nu}{}^a{}_b
&:=2\partial_{[\mu}\omega_{\nu]}{}^a{}_b
+2\omega_{[\mu}{}^a{}_c\omega_{\nu]}{}^c{}_b.
\label{eq:app-local-field-strengths}
\end{align}
With $T^\rho{}_{\mu\nu}=2\Gamma^\rho{}_{[\mu\nu]}$, the tetrad postulate gives
\begin{align}
Q_{\rho\mu\nu}&=e^a{}_{\mu}e^b{}_{\nu}\,\Xi_{\rho ab},
\nonumber\\
T^a{}_{\mu\nu}:=e^a{}_{\rho}T^\rho{}_{\mu\nu}
&=\Theta^a{}_{\nu\mu},
\nonumber\\
R^\rho{}_{\sigma\mu\nu}&=e_a{}^\rho e^b{}_{\sigma}\,
\Omega_{\mu\nu}{}^a{}_b.
\label{eq:app-local-global-fields}
\end{align}

The excitation notation forms two parallel triples. Locally, the momenta
conjugate to $(\Xi,\Theta,\Omega)$ are $(\pi,\rho,\sigma)$ as defined in
Eq.~\eqref{eq:lg-excitations}. Globally, the momenta conjugate to
$(Q,T,R)$ are $(\Pi,P,\Sigma)$ as defined in
Eq.~\eqref{eq:lg-global-momenta}. In the coframe sector the relation
$\Theta^a{}_{\mu\nu}=T^a{}_{\nu\mu}$ gives
$\rho^{\mu\nu}{}_a=P_a{}^{\nu\mu}$ by the chain rule. As such, $\rho$ and $P$
are not distinct torsion momenta. $\rho$ accompanies the local variable
$\Theta$, while $P$ accompanies the global variable $T$. The same local/global
notation change occurs in all three rows of the metric-affine trinity.

For STEGR, the weight-one momentum density and the nondensitized
nonmetricity superpotential are related by
\begin{equation}
\Pi^{\rho\mu\nu}=k\mathfrak P^{\rho\mu\nu},\qquad
k=-\frac{2\sqrt{-g}}{\kappa^2}.
\label{eq:app-momentum-dictionary}
\end{equation}
The indexed tensor $\Sigma_\rho{}^{\sigma\mu\nu}$ is
distinguished from a hypersurface $\Sigma$ by its indices.

The derivative $\partial_\mu$ is a coordinate derivative. The operators
$\nabla_\mu$, $D_\mu$, and $\bar\nabla_\mu$ are the full affine, local-frame,
and Levi--Civita covariant derivatives. For a general affine first-jet map,
the second Hamilton equation contains the gauge-covariant Euler operator
$\mathscr E_{\mathcal F}$ defined in Eq.~\eqref{eq:affine-euler-operator}.
It reduces to $\nabla_\mu\Pi^\mu{}_A$ in the identity-symbol sector, which
includes scalar fields and STEGR nonmetricity. Torsion and curvature retain
the projections produced by their antisymmetrizing symbols.

\begin{table}[htbp]
\centering
\caption{Hypersurface objects used in the deformation construction.}
\label{tab:app-hypersurface-objects}
\begin{tabular}{@{}p{0.28\linewidth}>{\raggedright\arraybackslash}p{0.66\linewidth}@{}}
\toprule
Object & Definition or role \\
\midrule
$X:\bar\Sigma\hookrightarrow M$ & Admissible embedding of the fixed oriented manifold $\bar\Sigma$ into spacetime. \\
$h_{ij}$ & Metric induced on the image hypersurface $\Sigma=X(\bar\Sigma)$. \\
$n_\mu$ & Unit normal, normalized by $n_\mu n^\mu=\eps_n=-1$. \\
$\dd\Sigma$ & Proper hypersurface measure. The directed weight-zero surface element is $\eps_n n_\mu\dd\Sigma$. \\
$\widetilde n_\mu\dd^{,n-1}\sigma$ & Pullback of the coordinate current form, with $\widetilde n_\mu=\eps_n\sqrt h\,n_\mu/\sqrt{|g|}$. \\
$P_A$ & Proper normal momentum $P_A=\eps_n n_\mu\bar\Pi^\mu{}_A$. \\
$\pi_{\Sigma A}$ & Coordinate momentum density $\pi_{\Sigma A}=\widetilde n_\mu\Pi^\mu{}_A=\sqrt h\,P_A$. \\
$\theta^\mu$ & Symplectic-potential current $\theta^\mu=\Pi^\mu{}_A\,\delta\phi^A$. \\
$\Omega_\Sigma$ & Pulled-back presymplectic form defined in Eq.~\eqref{eq:presymplectic}. \\
\bottomrule
\end{tabular}
\end{table}

The local Euler--Lagrange equations are Eqs.~\eqref{eq:lg-local-el-g}--
\eqref{eq:lg-local-el-w}, and their Hamilton form is
Eqs.~\eqref{eq:lg-local-ham-g}--\eqref{eq:lg-local-ham-w}. The factor of two
in the coframe equation comes from the antisymmetric derivative symbol, not
from a second momentum convention. The curvature-sector equation is used
only when the connection is dynamical. The global equations comprise the
three kinematical equations in Eq.~\eqref{eq:lg-global-ham-kin} and the two
dynamical equations in Eqs.~\eqref{eq:lg-global-ham-g}--
\eqref{eq:lg-global-ham-G}. The STEGR metric equation is derived in
Sec.~\ref{ssec:stegr}.

%======================================================================
\section{Invertibility of the polysupermetric}
\label{app:inverse}

This appendix proves that the STEGR Legendre map is regular on the
full nonmetricity fiber. The result is used in the main text to
justify the covariant Hamiltonian density.
\begin{proof}[Proof of Theorem~\ref{thm:inverse}]
Decompose $V=V_0\oplus V_{\rm tr}$ into the double-trace-free and trace parts,
\begin{equation}
Q_{\rho\mu\nu}=\mathring Q_{\rho\mu\nu}+g_{\mu\nu}A_\rho+g_{\rho(\mu}B_{\nu)},
\qquad
A_\rho=\tfrac{5}{18}Q_\rho-\tfrac19\widetilde Q_\rho,
\quad
B_\rho=-\tfrac19 Q_\rho+\tfrac49\widetilde Q_\rho,
\label{eq:app-trace-decomposition}
\end{equation}
so that $\mathring Q_{\rho\lambda}{}^\lambda=\mathring Q_{\lambda\rho}{}^\lambda=0$.

On $V_0$ all trace terms in $\mathcal B$ and $\mathcal C$ drop, so with the index-cycling operator $(\mathsf AX)_{\rho\mu\nu}=X_{\mu\nu\rho}+X_{\nu\mu\rho}$ one has $\mathcal B|_{V_0}=\tfrac12(\mathsf A-\id)$ and $\mathcal C|_{\mathcal B(V_0)}=\mathsf A$. The only identity needed is the polynomial identity $\mathsf A^2=\mathsf A+2\,\id$ on $V$, which follows in two lines from the symmetry $X_{\rho\mu\nu}=X_{\rho\nu\mu}$:
\begin{equation}
(\mathsf A^2X)_{\rho\mu\nu}
=X_{\nu\rho\mu}+X_{\rho\nu\mu}+X_{\mu\rho\nu}+X_{\rho\mu\nu}
=X_{\nu\mu\rho}+X_{\mu\nu\rho}+2X_{\rho\mu\nu}
=(\mathsf AX)_{\rho\mu\nu}+2X_{\rho\mu\nu}.
\end{equation}
Hence, $\mathcal C\mathcal B|_{V_0}=\tfrac12(\mathsf A^2-\mathsf A)=\id_{V_0}$. The eigenspaces have the irreducible decomposition
\begin{equation}
 V_0=\Sym^3_0T^*M\oplus[2,1]_0,
 \qquad
 \dim\Sym^3_0T^*M=\dim[2,1]_0=16.
 \label{eq:app-tracefree-irreps}
\end{equation}
After using the metric musical map to identify the Hessian map with an
endomorphism of $V_0$, the cycling operator $\mathsf A$ has eigenvalue $2$ on
$\Sym^3_0T^*M$ and $-1$ on $[2,1]_0$. Therefore $\mathcal B|_{V_0}$ has
eigenvalue $\tfrac12$ with multiplicity $16$ and $-1$ with multiplicity $16$.
Both are nonzero.

Taking the two traces of $\mathcal B$ in four dimensions, with $p=\mathcal BQ$,
\begin{equation}
\begin{pmatrix}p_\rho\\ \widetilde p_\rho\end{pmatrix}
=\begin{pmatrix}1&-1\\ -\tfrac14&-\tfrac12\end{pmatrix}
\begin{pmatrix}Q_\rho\\ \widetilde Q_\rho\end{pmatrix},
\qquad \det=-\tfrac34\neq0,
\label{eq:app-traceblock}
\end{equation}
with inverse $Q_\rho=\tfrac23 p_\rho-\tfrac43\widetilde p_\rho$, $\widetilde Q_\rho=-\tfrac13 p_\rho-\tfrac43\widetilde p_\rho$, which reproduces the trace inversion in Eq.~\eqref{eq:app-traces}. Eq.~\eqref{eq:app-traceblock} shows the trace sector is an invertible $2\times2$ system without zero modes.

Substituting $p=\mathcal BQ$ into $\mathcal C$, the index-cycling part gives
\begin{equation}
p_{\mu\nu\rho}+p_{\nu\mu\rho}
=Q_{\rho\mu\nu}-\tfrac12 g_{\mu\nu}Q_\rho+g_{\rho\nu}\bigl(\tfrac14 Q_\mu-\tfrac12\widetilde Q_\mu\bigr)+g_{\rho\mu}\bigl(\tfrac14 Q_\nu-\tfrac12\widetilde Q_\nu\bigr),
\end{equation}
while the trace correction in $\mathcal C$, evaluated with Eq.~\eqref{eq:app-traceblock}, gives exactly the negatives of the trace terms above. Adding the two displays cancels every trace term and leaves $(\mathcal C\mathcal BQ)_{\rho\mu\nu}=Q_{\rho\mu\nu}$ for all $Q\in V$. Since $V$ is finite dimensional, the left inverse is two-sided, and written with unrestricted indices both contractions equal the symmetric-pair projector in Eq.~\eqref{eq:projector}. Note that with weighted symmetrization the projector $\delta^\tau_\rho\delta^\gamma_{(\mu}\delta^\delta_{\nu)}$ already contains the factor $\tfrac12$. This proves $\mathcal C\mathcal B=\mathcal B\mathcal C=\id_V$.

The Lorentzian inertia follows from the same decomposition. The natural contraction on
$V=T^*M\otimes\Sym^2T^*M$ has signature $(24,16)$. On the two trace embeddings its internal Gram matrix is
\begin{equation}
 \begin{pmatrix}4&1\\[2pt]1&5/2\end{pmatrix},
 \qquad \det=9>0,
 \label{eq:app-trace-natural-gram}
\end{equation}
times the Lorentzian vector metric. The trace sector therefore has signature $(6,2)$, so $V_0$ has signature $(18,14)$. More explicitly, a symmetric rank-three tensor has positive components with zero or two time indices and negative components with one or three time indices. Removing its vector trace gives
\begin{equation}
 \operatorname{sig}(\Sym^3_0T^*M)=(10,6),
 \qquad
 \operatorname{sig}([2,1]_0)=(8,8).
 \label{eq:app-tracefree-signatures}
\end{equation}
Multiplication by the $\mathcal B$ eigenvalues $\tfrac12$ and $-1$ therefore gives the trace-free quadratic form signature $(18,14)$.

It remains to count the trace block. Substituting
$Q_{\rho\mu\nu}=g_{\mu\nu}A_\rho+g_{\rho(\mu}B_{\nu)}$ with the coefficients in Eq.~\eqref{eq:app-trace-decomposition} into the nonmetricity scalar gives
\begin{equation}
 \mathbb Q_{\rm tr}
 =\frac1{72}\left(
 11Q_\rho Q^\rho-16Q_\rho\widetilde Q^\rho
 -4\widetilde Q_\rho\widetilde Q^\rho\right).
 \label{eq:app-trace-quadratic}
\end{equation}
For each spacetime component, the Hessian on $(Q_\rho,\widetilde Q_\rho)$ is
\begin{equation}
 \begin{pmatrix}
 11/36&-2/9\\[2pt]
 -2/9&-1/9
 \end{pmatrix},
 \qquad
 \det=-\frac1{12}.
 \label{eq:app-trace-hessian}
\end{equation}
The internal trace matrix has one positive and one negative eigenvalue. Tensoring it with a Lorentzian vector therefore gives trace-sector signature $(4,4)$. Hence
\begin{equation}
 \operatorname{sig}(\operatorname{Hess}\mathbb Q)=(22,18).
 \label{eq:app-stegr-inertia}
\end{equation}
The physical Legendre Hessian is $M=k\mathcal B$ with $k=-2\sqrt{-g}/\kappa^2<0$, so its inertia is $(18,22)$. This is the signature of the full field-strength Hessian, not of a spatially reduced kinetic operator.
\end{proof}

%======================================================================
\section{The generator identity, deformation kinematics, and the classical algebra}
\label{app:generator}

This appendix derives the reduced metric/ADM hypersurface-deformation
generator from the covariant phase-space identity and identifies the boundary
and constraint terms that enter its classical Bergmann--Komar algebra. It
does not include the independent inertial connection, teleparallel
multipliers, their momenta, or reducibility data.

Supplementing Eq.~\eqref{eq:kinematics}, when the ambient metric is also varied,
\begin{equation}
\dot h_{ij}\big|_{\dot g}=e_i{}^\mu e_j{}^\nu\dot g_{\mu\nu},
\qquad
\dot n_\mu\big|_{\dot g}=\tfrac{\eps_n}{2}n_\mu n^\alpha n^\beta\dot g_{\alpha\beta},
\qquad
\frac{d}{ds}\dd\Sigma\Big|_{\dot g}=\tfrac12 h^{ij}e_i{}^\mu e_j{}^\nu\dot g_{\mu\nu}\,\dd\Sigma .
\end{equation}
The first of Eq.~\eqref{eq:kinematics} shows the normal's deformation is purely tangential at fixed metric. Holding the spacetime vector $\xi^\mu$ fixed is not the same as holding its descriptors fixed. Next, we derive the master identity in Eq.~\eqref{eq:master-identity}.

On the reduced metric-sector covariant phase space, the first variation of the Lagrangian density defines $\theta^\mu$ via $\delta L_{\rm dens}=E_A\delta\phi^A+\partial_\mu\theta^\mu(\delta)$. Diffeomorphism covariance for field-independent $\xi$ gives $\delta_\xi L_{\rm dens}=\partial_\mu(\xi^\mu L_{\rm dens})$, whence the current in Eq.~\eqref{eq:noether-current} with $\partial_\mu J^\mu[\xi]=-E_AX_\xi\phi^A$. Varying $J^\mu[\xi]$ at fixed $\xi$, using $[\delta,X_\xi]=0$ and Cartan's formula $\mathcal L_\xi\theta=i_\xi\dd\theta+\dd(i_\xi\theta)$ on the spacetime form indices,
\begin{align}
\delta J^\mu[\xi]
&=\delta\Pi^\mu{}_A\,X_\xi\phi^A+\Pi^\mu{}_A\,X_\xi\delta\phi^A-\xi^\mu\bigl(E_A\delta\phi^A+\partial_\nu\theta^\nu(\delta)\bigr) \nonumber\\
&=\omega^\mu(\delta,X_\xi)+\partial_\nu\bigl(\xi^\nu\Pi^\mu{}_A\delta\phi^A-\xi^\mu\Pi^\nu{}_A\delta\phi^A\bigr)-\xi^\mu E_A\delta\phi^A,
\end{align}
which is Eq.~\eqref{eq:master-identity}. The corner two-form $k^{\mu\nu}_\xi$ is one convenient representative, with equivalent choices differing by identically conserved superpotentials that move the boundary representative but not the identity. Integrating over $\Sigma$ on shell gives Eq.~\eqref{eq:integrated-identity}. The generator is Hamiltonian when the resulting corner one-form vanishes or is integrable on the chosen boundary phase space.

For reduced metric-sector STEGR, with $P^{\alpha\beta}=\eps_n n_\mu\bar\Pi^{\mu\alpha\beta}$ and the covariant split of $\nabla_{(\alpha}\xi_{\beta)}$ into descriptor derivatives and frame terms, the raw proper-measure scalars of Eq.~\eqref{eq:NC-split} are
\begin{align}
\mathcal C^{\rm STEGR}_{\perp,\rm raw}
&=P^{\alpha\beta}\bigl(\mathcal L_ng_{\alpha\beta}-Q_{\perp\alpha\beta}\bigr)-\bar\Pi^{i\alpha\beta}Q_{i\alpha\beta}+\Ham^{\rm bulk}_{\rm STEGR},
\\
\mathcal C^{\rm STEGR}_{i,\rm raw}
&=P^{\alpha\beta}\,\mathcal L_{e_i}g_{\alpha\beta},
\end{align}
with descriptor momenta
\begin{equation}
 \mathcal U^i_\perp=2P^{\alpha\beta}n_\alpha e^i{}_\beta,
 \qquad
 \mathcal U^i{}_j=2P^{\alpha\beta}e_{j\alpha}e^i{}_\beta,
\end{equation}
integrated by parts into the improved densities and the boundary term. The factor two follows from the symmetry of $P^{\alpha\beta}$ and the weighted symmetrization in $\mathcal L_\xi g_{\alpha\beta}$. The equivalence-restoring divergence of the action, $L^{\rm bdry}_{\rm STEGR}=-\partial_\alpha\Pi^\alpha$, changes the current by an exact form. With outward corner normal $s^\mu=s^ie_i{}^\mu$, its full corner is
\begin{equation}
B^{\rm STEGR,eq}_{\partial\Sigma}[N,N^i]
=\oint_{\partial\Sigma}dS\left[Ns_\mu\bar\Pi^\mu-(s_iN^i)P_{\rm tr}\right],
\qquad P_{\rm tr}:=\eps_n n_\mu\bar\Pi^\mu,
\end{equation}
a corner integral and not a piece of the bulk operator. The lapse-only expression follows when the shift preserves the corner, $s_iN^i=0$.

For field-independent $\xi$ and $\eta$, the active variations satisfy $[\delta_\xi,\delta_\eta]\phi^A=\delta_{[\xi,\eta]_{\rm Lie}}\phi^A$ up to internal gauge terms, giving the on-shell charge algebra with constraint and boundary remainders quoted in Sec.~\ref{ssec:algebra}. For fixed descriptors, varying $n^\mu$ under the first deformation, $\delta_\xi\eta^\mu=M\,\delta_\xi n^\mu+\delta_\xi(M^ie_i{}^\mu)$, the bracket that closes the variations is Eq.~\eqref{eq:BK}. In descriptors, this is the Dirac--Teitelboim algebra Eqs.~\eqref{eq:DT-lapse}--\eqref{eq:DT-shift}, equivalently the three elementary brackets
\begin{align}
[(0,\vec N),(0,\vec M)]=(0,\mathcal L_{\vec N}\vec M),\quad
[(0,\vec N),(M,0)]=(\mathcal L_{\vec N}M,0), \\
[(N,0),(M,0)]=\bigl(0,\,-\eps_n h^{ij}(N\partial_jM-M\partial_jN)\bigr).
\end{align}
The conditional Poisson closure in Eq.~\eqref{eq:conditional-closure} applies only after the corner flux is integrable. Nonintegrable flux instead requires the adjusted bracket of charge variations. In the compact reduced sector the boundary terms vanish and Eq.~\eqref{eq:reduced-closure} holds. No unreduced closure statement follows without adding the connection, multiplier, reducibility, and boundary sectors. Closure in nontrivial affine-holonomy sectors also requires flatness conditions on the holonomy lift.

%======================================================================
\section{Dirac analysis in the declared sector}
\label{app:dirac}

This appendix proves the classification in the sector used in the paper, following the standard Dirac procedure \cite{HenneauxTeitelboim1992}. Take a compact slice without boundary, subtract the STEGR equivalence boundary term, solve flatness and torsion locally in the trivial-holonomy sector, and impose the coincident gauge. Under these assumptions, the reduced bulk action has the ADM canonical form
\begin{align}
 S_{\mathrm{red}}={}&\int du\int_\Sigma d^3x\,
 \bigl(\pi^{ij}\dot\gamma_{ij}+\pi_N\dot N+\pi_i\dot N^i
 -N\mathcal H_\perp-N^i\mathcal H_i
 -\lambda_N\pi_N-\lambda^i\pi_i\bigr).
 \label{eq:app-dirac-action}
\end{align}
The canonical variables are $(N,N^i,\gamma_{ij};\pi_N,\pi_i,\pi^{ij})$, with fundamental brackets
\begin{align}
 \{\gamma_{ij}(x),\pi^{kl}(y)\}
 &=\delta^k_{(i}\delta^l_{j)}\delta_\Sigma(x,y), &
 \{N(x),\pi_N(y)\}&=\delta_\Sigma(x,y),\nonumber\\
 \{N^i(x),\pi_j(y)\}&=\delta^i_j\delta_\Sigma(x,y).
 \label{eq:app-dirac-brackets}
\end{align}
The absence of $\dot N$ and $\dot N^i$ from the original Lagrangian gives the four primary constraints $\pi_N\approx0$ and $\pi_i\approx0$. The total Hamiltonian and its smeared secondary constraints are
\begin{align}
 H_T&=H[N]+D[\vec N]+\int_\Sigma d^3x\,
 (\lambda_N\pi_N+\lambda^i\pi_i),\nonumber\\
 H[N]&=\int_\Sigma d^3x\,N\mathcal H_\perp,
\qquad  D[\vec N]=\int_\Sigma d^3x\,N^i\mathcal H_i.
\end{align}
Preservation of the primary constraints gives
\begin{equation}
 \dot\pi_N(x)=\{\pi_N(x),H_T\}=-\mathcal H_\perp(x)\approx0,
 \qquad
 \dot\pi_i(x)=\{\pi_i(x),H_T\}=-\mathcal H_i(x)\approx0.
 \label{eq:app-primary-stabilization}
\end{equation}
These are precisely the secondary constraints in Eq.~\eqref{eq:adm-constraints}.

On the compact slice, integration by parts gives
\begin{equation}
 D[\vec N]=\int_\Sigma d^3x\,\pi^{ij}\mathcal L_{\vec N}\gamma_{ij}.
\end{equation}
It follows directly from Eq.~\eqref{eq:app-dirac-brackets} that
\begin{equation}
 \{\gamma_{ij},D[\vec N]\}=\mathcal L_{\vec N}\gamma_{ij},
 \qquad
 \{\pi^{ij},D[\vec N]\}=\mathcal L_{\vec N}\pi^{ij},
 \label{eq:app-D-generator}
\end{equation}
where $\pi^{ij}$ is a tensor density of weight one. Thus $D[\vec N]$ generates spatial diffeomorphisms. This proves the first bracket below. The second follows because $\mathcal H_\perp$ is a scalar density of weight one.

It remains to calculate the bracket of two normal generators. The required functional derivatives are
\begin{align}
 \frac{\delta H[N]}{\delta\pi^{ij}}
 &=\frac{\kappa^2N}{\sqrt\gamma}
 \left(\pi_{ij}-\frac12\gamma_{ij}\pi\right),\nonumber\\
 \left.\frac{\delta H[N]}{\delta\gamma_{ij}}\right|_{D N}
 &=\frac{2\sqrt\gamma}{\kappa^2}
 \left(\gamma^{ij}D^2N-D^iD^jN\right).
 \label{eq:app-H-functionals}
\end{align}
The omitted terms contain no derivatives of the lapse. They are symmetric under $N\leftrightarrow M$ and cancel in the antisymmetrized bracket. Substitution of Eq.~\eqref{eq:app-H-functionals} gives
\begin{align}
 \{H[N],H[M]\}
 &=2\int_\Sigma d^3x\,\pi^{ij}
 \left(ND_iD_jM-MD_iD_jN\right)\nonumber\\
 &=\int_\Sigma d^3x\,(-2D_i\pi^{ij})
 \left(ND_jM-MD_jN\right)\nonumber\\
 &=D\!\left[\gamma^{ij}(N\partial_jM-M\partial_jN)\right].
 \label{eq:app-HH-calculation}
\end{align}
The second line uses one integration by parts and has no boundary contribution. Combining these results yields
\begin{align}
 \{D[\vec N],D[\vec M]\}&=D[[\vec N,\vec M]],\\
 \{D[\vec N],H[M]\}&=H[\mathcal L_{\vec N}M],\\
 \{H[N],H[M]\}&=D\!\left[h^{ij}(N\partial_jM-M\partial_jN)\right].
 \label{eq:dirac-reduced-algebra}
\end{align}
The secondary constraints do not depend on $N$, $N^i$, $\pi_N$, or $\pi_i$, so their brackets with the primary constraints vanish. Equation~\eqref{eq:dirac-reduced-algebra} also shows that preserving the secondary constraints produces only combinations of constraints and no tertiary constraints. Hence all eight constraints are first class. The phase space has dimension $20$ per point, so Eq.~\eqref{eq:dof} gives two local degrees of freedom. This completes the Dirac classification in the declared compact reduced sector.

For comparison, an unreduced action would keep independent $(g_{\mu\nu},\Gamma^\rho{}_{\mu\nu})$ and impose $T=R=0$ by multipliers. Its multiplier momenta, connection momenta, reducible flatness constraints, gauge generators, boundary pairs, and holonomy zero modes must all be stabilized before a rank can be quoted. Boundary conditions and holonomy strata can change the global constraint content, but no quantitative unreduced rank formula is proved here. All constraint counts in the paper refer to the compact reduced sector above.
%======================================================================
\end{appendices}

\bibliographystyle{unsrt}
\bibliography{sn-bibliography}

@article{BeltranJimenez:2019tjy,
  author  = {Beltr{\'a}n Jim{\'e}nez, Jose and Heisenberg, Lavinia and Koivisto, Tomi S.},
  title   = {The Geometrical Trinity of Gravity},
  journal = {Universe},
  volume  = {5},
  number  = {7},
  pages   = {173},
  year    = {2019},
  eprint  = {1903.06830},
  archivePrefix = {arXiv}
}

@article{Capozziello:2021pcg,
  author  = {Capozziello, Salvatore and Finch, Andrew and Said, Jackson Levi and Magro, Alessio},
  title   = {The 3+1 formalism in teleparallel and symmetric teleparallel gravity},
  journal = {Eur. Phys. J. C},
  volume  = {81},
  number  = {12},
  pages   = {1141},
  year    = {2021},
  eprint  = {2108.03075},
  archivePrefix = {arXiv}
}

@article{Sauer:2004hj,
  author  = {Sauer, Tilman},
  title   = {Field equations in teleparallel spacetime: Einstein's fernparallelismus approach towards unified field theory},
  journal = {Historia Math.},
  volume  = {33},
  pages   = {399--439},
  year    = {2006},
  eprint  = {physics/0405142},
  archivePrefix = {arXiv}
}

@article{1976Cho,
  author  = {Cho, Y. M.},
  title   = {Einstein Lagrangian as the translational Yang-Mills Lagrangian},
  journal = {Phys. Rev. D},
  volume  = {14},
  number  = {10},
  pages   = {2521--2525},
  year    = {1976}
}

@article{Hayashi1979,
  author  = {Hayashi, Kenji and Shirafuji, Takeshi},
  title   = {New general relativity},
  journal = {Phys. Rev. D},
  volume  = {19},
  pages   = {3524--3553},
  year    = {1979}
}

@article{Maluf:1994ji,
  author  = {Maluf, J. W.},
  title   = {Hamiltonian formulation of the teleparallel description of general relativity},
  journal = {J. Math. Phys.},
  volume  = {35},
  pages   = {335--343},
  year    = {1994}
}

@article{Maluf:2013gaa,
  author  = {Maluf, J. W.},
  title   = {The teleparallel equivalent of general relativity},
  journal = {Annalen Phys.},
  volume  = {525},
  pages   = {339--357},
  year    = {2013},
  eprint  = {1303.3897},
  archivePrefix = {arXiv}
}

@article{Nester:1998mp,
  author  = {Nester, James M. and Yo, Hwei-Jang},
  title   = {Symmetric teleparallel general relativity},
  journal = {Chin. J. Phys.},
  volume  = {37},
  pages   = {113},
  year    = {1999},
  eprint  = {gr-qc/9809049},
  archivePrefix = {arXiv}
}

@article{Blagojevic:2000pi,
  author  = {Blagojevic, M. and Vasilic, M.},
  title   = {Gauge symmetries of the teleparallel theory of gravity},
  journal = {Class. Quant. Grav.},
  volume  = {17},
  pages   = {3785--3798},
  year    = {2000},
  eprint  = {hep-th/0006080},
  archivePrefix = {arXiv}
}

@article{Blagojevic:2003cg,
  author  = {Blagojevic, M.},
  title   = {Three lectures on Poincare gauge theory},
  journal = {SFIN A},
  volume  = {1},
  pages   = {147--172},
  year    = {2003},
  eprint  = {gr-qc/0302040},
  archivePrefix = {arXiv}
}

@article{Barker2023,
  author  = {Barker, W. E. V.},
  title   = {Geometric multipliers and partial teleparallelism in Poincar\'e gauge theory},
  journal = {Phys. Rev. D},
  volume  = {108},
  pages   = {024053},
  year    = {2023},
  eprint  = {2205.13534},
  archivePrefix = {arXiv}
}

@article{HEHL19951,
  author  = {Hehl, Friedrich W. and McCrea, J. Dermott and Mielke, Eckehard W. and Ne'eman, Yuval},
  title   = {Metric-affine gauge theory of gravity: field equations, Noether identities, world spinors, and breaking of dilation invariance},
  journal = {Phys. Rep.},
  volume  = {258},
  number  = {1},
  pages   = {1--171},
  year    = {1995},
  eprint  = {gr-qc/9402012},
  archivePrefix = {arXiv}
}

@article{obukhov2003metric,
  author  = {Obukhov, Yuri N. and Pereira, Jos{\'e} G.},
  title   = {Metric-affine approach to teleparallel gravity},
  journal = {Phys. Rev. D},
  volume  = {67},
  number  = {4},
  pages   = {044016},
  year    = {2003},
  eprint  = {gr-qc/0212080},
  archivePrefix = {arXiv}
}

@article{BeltranJimenez:2017tkd,
  author  = {Beltr{\'a}n Jim{\'e}nez, Jose and Heisenberg, Lavinia and Koivisto, Tomi},
  title   = {Coincident General Relativity},
  journal = {Phys. Rev. D},
  volume  = {98},
  number  = {4},
  pages   = {044048},
  year    = {2018},
  eprint  = {1710.03116},
  archivePrefix = {arXiv}
}

@article{Capozziello:2022zzh,
  author  = {Capozziello, Salvatore and De Falco, Vittorio and Ferrara, Carmen},
  title   = {Comparing equivalent gravities: common features and differences},
  journal = {Eur. Phys. J. C},
  volume  = {82},
  number  = {10},
  pages   = {865},
  year    = {2022},
  eprint  = {2208.03011},
  archivePrefix = {arXiv}
}

@article{Hehl:1976my,
  author  = {Hehl, Friedrich W. and Kerlick, G. David and Von Der Heyde, Paul},
  title   = {On a New Metric Affine Theory of Gravitation},
  journal = {Phys. Lett. B},
  volume  = {63},
  pages   = {446--448},
  year    = {1976}
}

@book{Hehl2003,
  author    = {Hehl, Friedrich W. and Obukhov, Yuri N.},
  title     = {Foundations of Classical Electrodynamics: Charge, Flux, and Metric},
  series    = {Progress in Mathematical Physics},
  volume    = {33},
  publisher = {Birkh{\"a}user},
  address   = {Boston, MA},
  doi       = {10.1007/978-1-4612-0051-2},
  year      = {2003}
}

@article{Puetzfeld:2014qba,
  author  = {Puetzfeld, Dirk and Obukhov, Yuri N.},
  title   = {Equations of motion in metric-affine gravity: A covariant unified framework},
  journal = {Phys. Rev. D},
  volume  = {90},
  number  = {8},
  pages   = {084034},
  year    = {2014},
  eprint  = {1408.5669},
  archivePrefix = {arXiv}
}

@article{Obukhov:2018bmf,
  author  = {Obukhov, Yuri N.},
  title   = {Poincar{\'e} gauge gravity: An overview},
  journal = {Int. J. Geom. Meth. Mod. Phys.},
  volume  = {15},
  number  = {supp01},
  pages   = {1840005},
  year    = {2018},
  eprint  = {1805.07385},
  archivePrefix = {arXiv}
}

@article{Maluf:2000zt,
  author  = {Maluf, J. W. and da Rocha-Neto, J. F.},
  title   = {Hamiltonian formulation of general relativity in the teleparallel geometry},
  journal = {Phys. Rev. D},
  volume  = {64},
  pages   = {084014},
  year    = {2001},
  note    = {arXiv:gr-qc/0002059},
  eprint  = {gr-qc/0002059},
  archivePrefix = {arXiv}
}

@article{Guzman:2023,
  author  = {Guzm{\'a}n, Maria-Jose},
  title   = {The Hamiltonian constraint in the symmetric teleparallel equivalent of general relativity},
  journal = {Gen. Rel. Grav.},
  volume  = {58},
  number  = {3},
  pages   = {26},
  year    = {2026},
  note    = {arXiv:2311.01424},
  eprint  = {2311.01424},
  archivePrefix = {arXiv}
}

@misc{CapozzielloSauro,
  author = {Capozziello, Salvatore and Sauro, Dario},
  title  = {Extrinsic geometry and Hamiltonian analysis of symmetric teleparallel gravity},
  year   = {2026},
  note   = {arXiv:2604.19310},
  eprint  = {2604.19310},
  archivePrefix = {arXiv}
}

@article{DeWitt1967,
  author  = {DeWitt, Bryce S.},
  title   = {Quantum Theory of Gravity. I. The Canonical Theory},
  journal = {Phys. Rev.},
  volume  = {160},
  pages   = {1113--1148},
  year    = {1967}
}

@article{Isham:1992ms,
  author  = {Isham, C. J.},
  title   = {Canonical quantum gravity and the problem of time},
  journal = {NATO Sci. Ser. C},
  volume  = {409},
  pages   = {157--287},
  year    = {1993},
  eprint  = {gr-qc/9210011},
  archivePrefix = {arXiv}
}

@incollection{Kuchar:1991qf,
  author    = {Kucha{\v r}, Karel V.},
  title     = {Time and interpretations of quantum gravity},
  booktitle = {Proc. 4th Canadian Conf. on General Relativity and Relativistic Astrophysics},
  editor    = {Kunstatter, G. and Vincent, D. and Williams, J.},
  pages     = {211--314},
  publisher = {World Scientific},
  address   = {Singapore},
  year      = {1992}
}

@article{Teitelboim1973,
  author  = {Teitelboim, Claudio},
  title   = {How commutators of constraints reflect the space-time structure},
  journal = {Ann. Phys.},
  volume  = {79},
  pages   = {542--557},
  year    = {1973}
}

@article{Bergmann:1972ww,
  author  = {Bergmann, Peter G. and Komar, Arthur},
  title   = {The coordinate group symmetries of general relativity},
  journal = {Int. J. Theor. Phys.},
  volume  = {5},
  pages   = {15--28},
  year    = {1972}
}

@article{Hojman:1976vp,
  author  = {Hojman, Sergio A. and Kucha{\v r}, Karel and Teitelboim, Claudio},
  title   = {Geometrodynamics regained},
  journal = {Ann. Phys.},
  volume  = {96},
  number  = {1},
  pages   = {88--135},
  year    = {1976}
}

@article{Blohmann:2010jd,
  author  = {Blohmann, Christian and Fernandes, Marco Cezar Barbosa and Weinstein, Alan},
  title   = {Groupoid symmetry and constraints in general relativity},
  journal = {Commun. Contemp. Math.},
  volume  = {15},
  pages   = {1250061},
  year    = {2013},
  note    = {arXiv:1003.2857},
  eprint  = {1003.2857},
  archivePrefix = {arXiv}
}

@article{Bojowald:2016hgh,
  author  = {Bojowald, Martin and Brahma, Suddhasattwa and B{\"u}y{\"u}k{\c c}am, Umut and D'Ambrosio, Fabio},
  title   = {Hypersurface-deformation algebroids and effective space-time models},
  journal = {Phys. Rev. D},
  volume  = {94},
  number  = {10},
  pages   = {104032},
  year    = {2016},
  note    = {arXiv:1610.08355},
  eprint  = {1610.08355},
  archivePrefix = {arXiv}
}

@article{Gunther1987,
  author  = {G{\"u}nther, Christian},
  title   = {The polysymplectic Hamiltonian formalism in field theory and calculus of variations. I: the local case},
  journal = {J. Diff. Geom.},
  volume  = {25},
  pages   = {23--53},
  year    = {1987}
}

@article{Kanatchikov:1997wp,
  author  = {Kanatchikov, Igor V.},
  title   = {Canonical structure of classical field theory in the polymomentum phase space},
  journal = {Rept. Math. Phys.},
  volume  = {41},
  pages   = {49--90},
  year    = {1998},
  doi     = {10.1016/S0034-4877(98)80182-1},
  eprint  = {hep-th/9709229},
  archivePrefix = {arXiv}
}

@article{Kanatchikov:1997pj,
  author  = {Kanatchikov, Igor V.},
  title   = {On field theoretic generalizations of a Poisson algebra},
  journal = {Rept. Math. Phys.},
  volume  = {40},
  pages   = {225--234},
  year    = {1997},
  doi     = {10.1016/S0034-4877(97)85919-8},
  eprint  = {hep-th/9710069},
  archivePrefix = {arXiv}
}

@article{Kanatchikov:2000jz,
  author  = {Kanatchikov, Igor V.},
  title   = {Precanonical quantum gravity: Quantization without the space-time decomposition},
  journal = {Int. J. Theor. Phys.},
  volume  = {40},
  pages   = {1121--1149},
  year    = {2001},
  doi     = {10.1023/A:1017557603606},
  eprint  = {gr-qc/0012074},
  archivePrefix = {arXiv}
}

@misc{Kanatchikov:2023tegr,
  author = {Kanatchikov, Igor V.},
  title  = {The De Donder--Weyl Hamiltonian formulation of TEGR and its quantization},
  year   = {2023},
  note   = {arXiv:2308.10052},
  eprint  = {2308.10052},
  archivePrefix = {arXiv}
}

@article{Carinena:1991,
  author  = {Cari{\~n}ena, J. F. and Crampin, M. and Ibort, L. A.},
  title   = {On the multisymplectic formalism for first order field theories},
  journal = {Differ. Geom. Appl.},
  volume  = {1},
  number  = {4},
  pages   = {345--374},
  year    = {1991}
}

@misc{Gotay:1997,
  author = {Gotay, Mark J. and Isenberg, James and Marsden, Jerrold E. and Montgomery, Richard},
  title  = {Momentum Maps and Classical Relativistic Fields. Part I: Covariant Field Theory},
  year   = {2004},
  note   = {arXiv:physics/9801019},
  eprint  = {physics/9801019},
  archivePrefix = {arXiv}
}

@article{Forger:2004,
  author  = {Forger, Michael and Romero, Sandro Vieira},
  title   = {Covariant Poisson brackets in geometric field theory},
  journal = {Commun. Math. Phys.},
  volume  = {256},
  pages   = {375--410},
  year    = {2005},
  note    = {arXiv:math-ph/0408008},
  eprint  = {math-ph/0408008},
  archivePrefix = {arXiv}
}

@incollection{Wheeler1968,
  author    = {Wheeler, John A.},
  title     = {Superspace and the nature of quantum geometrodynamics},
  booktitle = {Battelle Rencontres},
  editor    = {DeWitt, C. and Wheeler, J. A.},
  pages     = {242--307},
  publisher = {W. A. Benjamin},
  address   = {New York},
  year      = {1968}
}

@incollection{Crnkovic1988CPS,
  author    = {Crnkovi{\'c}, {\v C}edomir and Witten, Edward},
  title     = {Covariant description of canonical formalism in geometrical theories},
  booktitle = {Three Hundred Years of Gravitation},
  editor    = {Hawking, S. W. and Israel, W.},
  pages     = {676--684},
  publisher = {Cambridge University Press},
  address   = {Cambridge},
  year      = {1987}
}

@article{Lee:1990nz,
  author  = {Lee, Joohan and Wald, Robert M.},
  title   = {Local symmetries and constraints},
  journal = {J. Math. Phys.},
  volume  = {31},
  pages   = {725--743},
  year    = {1990}
}

@article{Iyer:1994ys,
  author  = {Iyer, Vivek and Wald, Robert M.},
  title   = {Some properties of Noether charge and a proposal for dynamical black hole entropy},
  journal = {Phys. Rev. D},
  volume  = {50},
  pages   = {846--864},
  year    = {1994},
  eprint  = {gr-qc/9403028},
  archivePrefix = {arXiv}
}

@article{WaldZoupas:1999wa,
  author  = {Wald, Robert M. and Zoupas, Andreas},
  title   = {A General definition of ``conserved quantities'' in general relativity and other theories of gravity},
  journal = {Phys. Rev. D},
  volume  = {61},
  pages   = {084027},
  year    = {2000},
  eprint  = {gr-qc/9911095},
  archivePrefix = {arXiv}
}

@article{Regge:1974zd,
  author  = {Regge, Tullio and Teitelboim, Claudio},
  title   = {Role of surface integrals in the Hamiltonian formulation of general relativity},
  journal = {Ann. Phys.},
  volume  = {88},
  pages   = {286--318},
  year    = {1974}
}

@article{Brown:1992br,
  author  = {Brown, J. David and York, James W.},
  title   = {Quasilocal energy and conserved charges derived from the gravitational action},
  journal = {Phys. Rev. D},
  volume  = {47},
  pages   = {1407--1419},
  year    = {1993},
  note    = {arXiv:gr-qc/9209012},
  eprint  = {gr-qc/9209012},
  archivePrefix = {arXiv}
}

@book{HenneauxTeitelboim1992,
  author    = {Henneaux, Marc and Teitelboim, Claudio},
  title     = {Quantization of Gauge Systems},
  publisher = {Princeton University Press},
  address   = {Princeton, NJ},
  isbn      = {978-0-691-03769-1},
  year      = {1992}
}

@article{Utiyama,
  author={Utiyama, Ryoyu},
  title={Invariant theoretical interpretation of interaction},
  journal={Phys. Rev.}, volume={101}, pages={1597--1607}, year={1956}
}

@article{Sardanashvily:1994fg,
  author={Sardanashvily, Gennadi},
  title={Hamiltonian field systems on composite manifolds},
  year={1994}, eprint={hep-th/9409159}, archivePrefix={arXiv}
}

@article{Obukhov:2023yti,
  author={Obukhov, Yuri N. and Hehl, Friedrich W.},
  title={Hyperfluid model revisited},
  journal={Phys. Rev. D}, volume={108}, number={10}, pages={104044}, year={2023},
  eprint  = {2308.06598},
  archivePrefix = {arXiv}
}

@article{Isham:1984sb,
  author={Isham, Christopher J. and Kuchar, Karel V.},
  title={Representations of space-time diffeomorphisms. 1. Canonical parametrized field theories},
  journal={Annals Phys.}, volume={164}, pages={288--315}, year={1985}
}

@incollection{ArnowittDeserMisner1962,
  author    = {Arnowitt, Richard and Deser, Stanley and Misner, Charles W.},
  title     = {The Dynamics of General Relativity},
  booktitle = {Gravitation: An Introduction to Current Research},
  editor    = {Witten, Louis},
  publisher = {Wiley},
  address   = {New York},
  pages     = {227--265},
  year      = {1962},
  eprint    = {gr-qc/0405109},
  archivePrefix = {arXiv}
}

\end{document}